\newtheorem{lemma}{Lemma}[section]
\newtheorem{theorem}{Theorem}[section]
\newtheorem{definition}{Definition}[section]
\newtheorem{remark}{Remark}[section]
\newenvironment{proof}{\noindent\\ \noindent\relax{\sc
     Proof}}{{\samepage\par\nopagebreak\hbox
     to\hsize{\hfill$\Box$}}}
\newcommand{\be}{\begin{equation}} \newcommand{\ee}{\end{equation}}
\newcommand{\bd}{\begin{displaymath}} \newcommand{\ed}{\end{displaymath}}
\newcommand{\ba}{\begin{align}} \newcommand{\ea}{\end{align}}
\newcommand{\baa}{\begin{align*}} \newcommand{\eaa}{\end{align*}}
\newcommand{\ben}{\begin{enumerate}} \newcommand{\een}{\end{enumerate}}
\newcommand{\bi}{\begin{itemize}} \newcommand{\ei}{\end{itemize}}
\newcommand{\ud}{\mathrm{d}}
\newcommand{\Expectation}[1]{\operatorname{E}\left[ #1 \right]}
\newcommand{\Var}[1]{\operatorname{Var}\left[ #1 \right]}
\newcommand{\covariance}[2]{\operatorname{Cov}\left[ #1,#2 \right]}
\newcommand{\Et}[1]{\operatorname{E}\left[ #1 \vert \mathcal{Y}_{n}\right]}
\newcommand{\Vart}[1]{\operatorname{Var}\left[ #1 \vert  \mathcal{Y}_{n} \right]}
\newcommand{\covt}[2]{\operatorname{Cov}\left[ #1,#2 \vert  \mathcal{Y}_{n}\right]}
\begin{document}

\title{Phylogenetic confidence intervals for the optimal trait value}
\author{{\sc Krzysztof Bartoszek} and {\sc Serik Sagitov}} 

\maketitle

\begin{abstract}
We consider a stochastic evolutionary  model for a phenotype developing amongst $n$
related species with unknown phylogeny.
The unknown tree is modelled by a Yule process conditioned on $n$
contemporary nodes.
The trait value is assumed to evolve along lineages as an Ornstein--Uhlenbeck process. 
As a result, the trait values of  the $n$ species form a sample with dependent observations.
We establish three limit theorems for the sample mean corresponding to three domains for  the
adaptation rate.
In the case of fast adaptation, we show that for large $n$ the normalized sample mean is
approximately normally distributed.
Using these limit theorems, we develop novel confidence interval formulae for the optimal 
trait value.
\end{abstract}

Keywords : 
Central limit theorem, 
Conditioned Yule process,
macroevolution, martingales, 
Ornstein--Uhlenbeck process, phylogenetics

\section{Introduction}\label{intro}

Phylogenetic comparative methods deal with  multi-species trait value data.
This is an established and rapidly expanding area of research concerning evolution of phenotypes in groups of
related species living under various environmental conditions. An important feature of such data is the 
branching structure of evolution causing dependence among the observed trait values. 
For this reason the usual starting point for phylogenetic comparative studies is an inferred phylogeny 
describing the evolutionary relationships. The likelihood can be computed by assuming a model for trait 
evolution along the branches of this fixed tree, such as the Ornstein-Uhlenbeck process.

The one-dimensional Ornstein-Uhlenbeck model is characterized by four parameters: the optimal value $\theta$, 
the adaptation rate $\alpha>0$,  the ancestral value $X_0$, and the noise size $\sigma$. 
The classical Brownian motion model \cite{JFel85} can be viewed as a special case with $\alpha=0$ and $\theta$ 
being irrelevant. As with any statistical procedure, it is important to be able to compute confidence  
intervals for these parameters. However, confidence intervals are often not mentioned in 
phylogenetic comparative studies \cite{CBoeGCooPRal2012}. 

There are a number of possible numerical ways of calculating such confidence intervals when the underlying 
phylogenetic tree is known. 
Using a regression framework one can apply standard regression theory methods to compute confidence 
intervals for $(\theta,X_0)$ conditionally on $(\alpha,\sigma^2)$ \cite{TGarAIve2000,THan1997,EMarTHan1997,FRoh2001}.
Notably in \cite{TGarPMidAIve1999} the authors derive analytical formulae for confidence intervals for $X_0$ under the 
Brownian motion model.
In more complicated situations a parametric bootstrap is a (computationally very demanding) way out   
\cite{CBoeGCooPRal2012,MButAKinOUCH, AIvePMidTGar2007}. Another approach
is to report a support surface \cite{THan1997,THanJPieSOrzSLOUCH}, or consider the curvature of the 
likelihood surface  \cite{KBaretal}.

All of the above methods have in common that they assume that the phylogeny describing the evolutionary 
relationships is fully resolved.
Possible errors in the topology can cause problems -- the closer to the tips they occur, the more problematic they can be
\cite{MSym2002}. 
On the other hand, the regression estimators will remain unbiased even with a misspecified
tree \cite{FRoh2006} and also seem to be robust with respect to errors in the phylogeny
at least for the Brownian motion model \cite{ESto2011}. There are only few papers addressing the issue of 
phylogenetic uncertainty.
An MCMC procedure to jointly estimate the phylogeny and parameters of the 
Brownian model of trait evolution was suggested in \cite{JHueBRanJMas2000,JHueBRan}.
Recently, \cite{GSlaetal2012} develops an Approximate Bayesian Computation framework to estimate
Brownian motion parameters in the case of an incomplete tree.

Our paper studies a situation when nothing is known about the phylogeny. The simplest stochastic model addressing 
this case is a combination of a Yule tree and the Brownian motion on top of it: 
already in the 1970s, a 
joint maximum likelihood estimation procedure of a Yule tree and Brownian motion on top of 
it was proposed 
in \cite{AEdw1970}. This basic evolutionary model allows for far reaching analytical analysis  
\cite{KBar2014,FCraMSuc2013,SSagKBar2012}.
A more realistic stochastic model of this kind combines the Brownian motion with a birth-death tree allowing for 
extinction of species  \cite{FBok2010}. For the latter model \cite{SSagKBar2012} explicitly compute the 
so-called 
interspecies correlation coefficient.
Such ``tree-free" models are appropriate for working with fossil data when there may be available rich fossilized 
phenotypic information but the 
molecular material might have degraded so much that it is impossible to infer evolutionary relationships. 
In \cite{FCraMSuc2013}  the usefulness of the tree-free approach for contemporary species 
is demonstrated in an Carnivora order case study and in \cite{WMulFCra2015} the distribution
over the space of Yule trees of the interspecies correlation coefficient is calculated.

Conditioned birth--death processes as stochastic models for species trees,
have received significant attention in the last decade 
\cite{DAldLPop2005,TGer2008a,AMooetal,TreeSim1,TreeSim2,TStaMSte2012}.
In this work the unknown tree is modeled by the Yule process conditioned on $n$ extant species while the 
evolution of a trait along a lineage is viewed as the Ornstein-Uhlenbeck process, see Fig. \ref{tr}. 
We study the properties of the sample mean and sample variance computed from the vector of $n$ trait values. 
Our main results are three asymptotic confidence interval formulae for the optimal trait value $\theta$. 
These three formulae represent three asymptotic regimes for different values of the adaptation rate $\alpha$.

In the discussion in \cite{FCraMSuc2013} it is pointed out that ``as evolutionary biologists further refine our 
knowledge of
the tree of life, the number of clades whose phylogeny is truly unknown may diminish, along with
interest in tree-free estimation methods." In our opinion the main contribution
of such methods is that they indicate statistical and asymptotic properties of phylogenetic samples
under given evolutionary models. These properties can then be verified for 
other models of tree growth or real phylogenies
\cite{CAne2008,CAneLHoSRoc2014,OGasMSte2014,LHoCAne2013,LHoCAne2014,LHoCAne2015,EMosMSte2014}.
We believe furthermore that  
the easy-to-compute tree-free predictions 
will always play an important role of a sanity check to see
whether the conclusions based on the inferred phylogeny deviate much from those
from a ``typical'' phylogeny. 
Moreover, results like those presented here can also be used as a method of testing 
software for phylogenetic comparative models.

A detailed description of the evolutionary model along with our 
main results are presented in Section 
\ref{Smain}. Section \ref{Syou} contains new formulae for the Laplace transforms of important 
characteristics of the conditioned Yule species tree: the time to origin $U_{n}$ and the time $\tau^{(n)}$ to the most 
recent common ancestor for a pair of two species chosen at random out of $n$ extant species.
In Section \ref{Sout} we calculate the interspecies correlation coefficient for the Yule--Ornstein--Uhlenbeck model
and Section \ref{SecMart} contains the proof of our limit theorems. 
In Section \ref{Scons} we establish the consistency of the stationary variance estimator, 
which is needed for our confidence interval formulae, cf \cite{THan1997} where 
the residual sum of squares was suggested to estimate the stationary variance.
In Appendix \ref{appSaM} we calculate all the joint moments of $U_{n}$ and $\tau^{(n)}$.

Our main result, Theorem \ref{tMart}, should be compared with the limit theorems obtained in
\cite{RAdaPMil20111,RAdaPMil20112}. They also revealed  three asymptotic regimes in a related, 
though different setting, 
dealing with a branching Ornstein--Uhlenbeck process. In their case the time of observation is 
deterministic and the number of the tree tips is random, while in our case the observation time is 
random and the number of the tips is deterministic. Although it is possible (with some effort) 
to deduce our results from  \cite{RAdaPMil20111,RAdaPMil20112}, our proof provides a much more 
elementary derivation. We believe that our approach will be useful in addressing other biologically 
relevant issues like the formulae for the higher moments given in Appendix \ref{appSaM}. 
Another similar limit theorem, but one conditional on the sequence of species trees
generated by different mechanisms,
is derived in \cite{CAneLHoSRoc2014}.

\begin{figure}
\centering
\includegraphics[width=10cm]{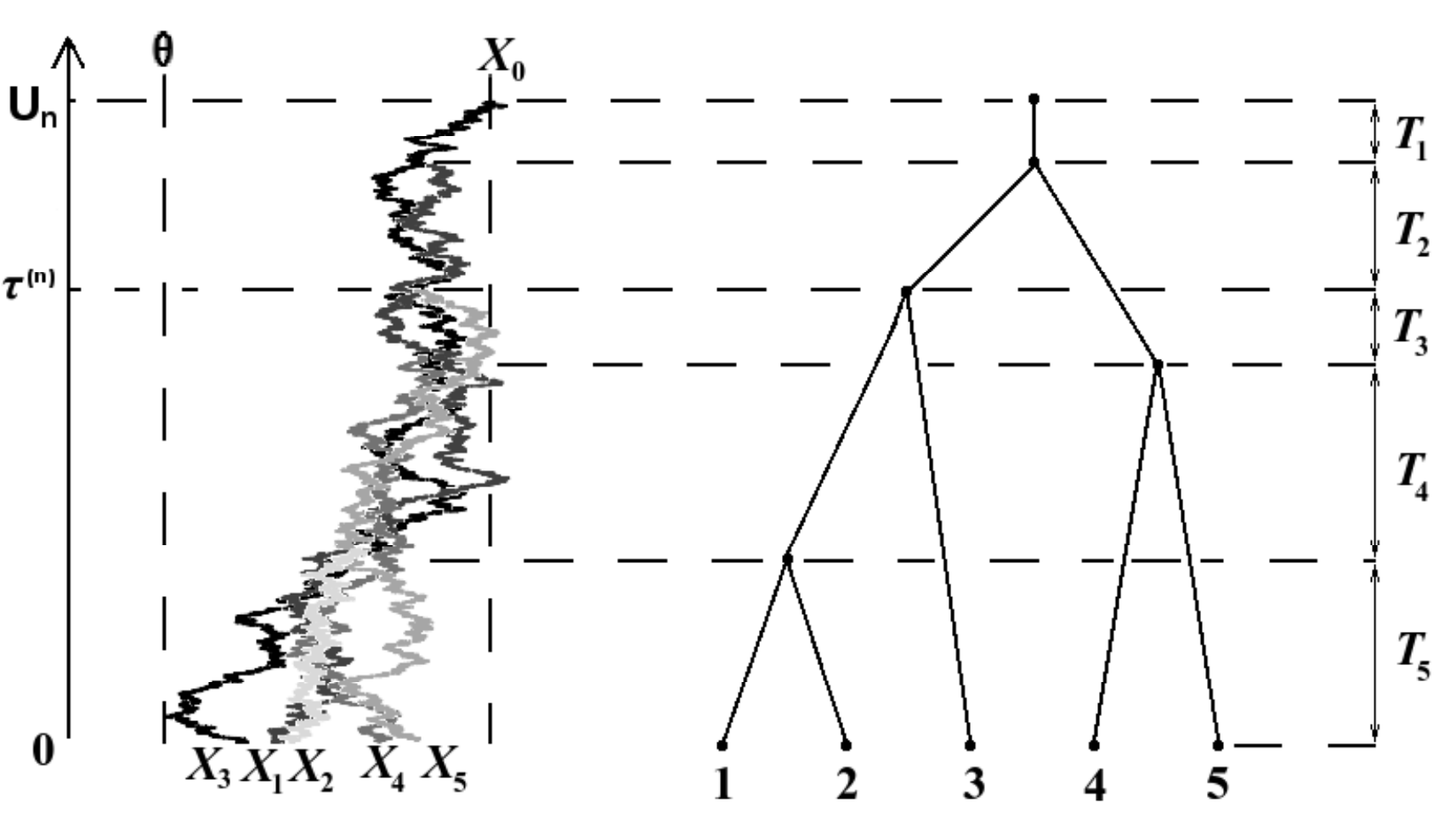}
\caption{
On the left: a branching Ornstein--Uhlenbeck process simulated on a realization of the  Yule $n$-tree with $n=5$ tips using the 
TreeSim \cite{TreeSim1,TreeSim2} and mvSLOUCH \cite{KBaretal} R \cite{R} packages. 
Parameters used are $\alpha=1$, $\sigma=1$,  $X_0-\theta=2$,  after the tree height $U_{n}$ was scaled to 1.
On the right: the species tree disregarding the trait values supplied with the notation for the inter--speciation times. 
For the pair of tips (2,3) the time $\tau^{(n)}$ to their most recent common ancestor is marked on the time axis 
(starting at present and going back to the time of origin).}
\label{tr}
\end{figure}

\section{The model and  main results}\label{Smain}

This work deals with what we call the Yule--Ornstein--Uhlenbeck model which is characterized by four parameters $(X_0,\alpha,\sigma,\theta)$ and consists of two ingredients
\begin{enumerate}
\item the species tree connecting $n$ extant species is modeled by the pure birth Yule process \cite{GYul1924} with 
a unit speciation rate $\lambda=1$ and conditioned on having $n$ tips  \cite{TGer2008a}, 
\item the observed trait values $(X_1^{(n)},\ldots,X_n^{(n)})$ on the tips of the tree evolved from the ancestral state $X_0$ following the 
Ornstein--Uhlenbeck process with parameters $(\alpha,\sigma,\theta)$. 
\end{enumerate}
\begin{definition} \label{def}
Let $(T_{1},\ldots,T_{n})$ be independent exponential random variables  with parameters $(1,\ldots,n)$.  We define the Yule $n$-tree as  a random tree with $n$  tips which is
 constructed using a bottom-up algorithm based on the following two simple rules.
 
(1) During the time period $T_k$ the tree has $k$ branches. 

(2) For $k\in[2,n]$ the reduction from $k$ to $k-1$ branches occurs as two randomly chosen branches 
coalesce into one branch. 

\noindent The height the Yule $n$-tree is now $U_n=T_1+\ldots+T_n$.
\end{definition}
As shown in \cite{TGer2008a}, this definition corresponds to the standard Yule tree conditioned on having  $n$  tips at the moment of observation, assuming that the time to the origin has the improper uniform prior.

Following \cite{MButAKinOUCH,THan1997}, we model  trait evolution along a lineage using the Ornstein--Uhlenbeck process  $X(t)$ given by the stochastic differential equation

\be\label{eqOU}
\ud X(t) = -\alpha(X(t)-\theta)\ud t + \sigma \ud B(t),\quad X(0)=X_{0}.
\ee  
Here  $\alpha > 0$ is the adaptation rate,  $\theta$ is the optimal trait value, $\sigma^2$ is the noise variance, and $B(t)$ is the standard Wiener process. 
The distribution of $X(t)$ is normal with,

\be\label{solOU}
\Expectation{X(t)}=\theta + e^{-\alpha t}(X_{0}-\theta), \quad \Var{X(t)}=\frac{\sigma^{2}}{2\alpha}(1-e^{-2 \alpha t}),
\ee
implying that  $X(t)$ looses the effect of the ancestral state $X_{0}$ at an exponential rate. 
In the long run the Ornstein--Uhlenbeck process acquires a stationary normal distribution with mean $\theta$ and variance $\sigma^{2}/2\alpha$.

We propose  asymptotic confidence interval formulae for the optimal value $\theta$  which take into account phylogenetic uncertainty. To this end we study properties of the sample mean and sample variance

\begin{displaymath}
\overline{X}_{n}={X_1^{(n)}+\ldots+X_n^{(n)}\over n},\quad S^{2}_{n}=\frac{1}{n-1}\sum_{i=1}^{n}(X_{i}^{(n)}-\overline{X}_{n})^{2}.
\end{displaymath}
Using the properties of the Yule--Ornstein--Uhlenbeck model we find explicit expressions for $\Expectation{\overline X_n}$, $\Var{\overline X_n}$,  $\Expectation{S_n^2}$,
study the asymptotics of  $\Var{S_n^2}$, and prove the following limit theorem revealing three different asymptotic regimes.

\begin{theorem}\label{tMart}
Let $\delta={X_0-\theta\over\sqrt{\sigma^{2}/2\alpha}}$ be a normalized difference between the ancestral and optimal values.
Consider the normalized sample mean 
$\overline{Y}_{n}=\frac{\overline{X}_{n}-\theta}{\sqrt{\sigma^{2}/2\alpha}}$ of the Yule-Ornstein-Uhlenbeck process  with $\overline{Y}_0=\delta$. As $n\to\infty$ the process $\overline{Y}_{n}$ has the following limit behavior.

(i) If $\alpha>0.5$, then $\sqrt{n}\cdot \overline{Y}_{n}$ is asymptotically normally distributed with zero mean  and variance 
$\frac{2\alpha+1}{ 2\alpha-1}$.

(ii)  If $\alpha=0.5$, then  $\sqrt{n/\ln n}\cdot \overline{Y}_{n}$ is asymptotically normally distributed with zero mean  and variance $2$.

(iii) If $\alpha<0.5$, then
$n^{\alpha}\cdot \overline{Y}_{n}$ converges a.s. and in $L^2$ to a random variable $Y_{\alpha,\delta}$ with  $\Expectation{Y_{\alpha,\delta}}=\delta\Gamma(1+\alpha)$ and $\Expectation{Y_{\alpha,\delta}^2}=\left(\delta^{2}+\frac{4\alpha}{1-2\alpha} \right)\Gamma(1+2\alpha)$.
\end{theorem}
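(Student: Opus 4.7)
The plan is to exploit conditional Gaussianity. Given the Yule $n$-tree $\mathcal{T}_n$, the vector $(X_1^{(n)},\ldots,X_n^{(n)})$ is multivariate normal, and hence $\overline{Y}_n$ is itself conditionally Gaussian with mean $\mu_n = \delta\,e^{-\alpha U_n}$ and variance
$$V_n = \frac{1}{n} - e^{-2\alpha U_n} + \frac{n-1}{n}\,R_n, \qquad R_n := \binom{n}{2}^{-1}\sum_{1\le i<j\le n}e^{-2\alpha\tau_{ij}},$$
where $\tau_{ij}$ is the coalescent time of tips $i$ and $j$. This reduces the theorem to an asymptotic analysis of the two tree functionals $U_n$ and $R_n$.

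For cases (i) and (ii), the unconditional characteristic function of $a_n\overline{Y}_n$ equals
$$\mathbb{E}\bigl[\exp(ita_n\mu_n - \tfrac12 t^2 a_n^2 V_n)\bigr],$$
which is bounded by $1$ in modulus, so bounded convergence reduces asymptotic normality to the two limits $a_n\mu_n \xrightarrow{P} 0$ and $a_n^2 V_n \xrightarrow{P} v$ for the advertised $v$. The first follows from $\mathbb{E}[e^{-\alpha U_n}] = \prod_{k=1}^n k/(k+\alpha) \sim \Gamma(1+\alpha)\,n^{-\alpha}$ (Section \ref{Syou}), since $a_n n^{-\alpha} \to 0$ in both regimes. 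For the second I would first establish $\mathbb{E}[a_n^2 V_n] \to v$ using the Laplace transform of $\tau^{(n)}$ from Section \ref{Sout}; the phase transition at $\alpha=1/2$ emerges because the resulting expression behaves like a partial sum of $\sum_k k^{-2\alpha}$, which is summable for $\alpha>1/2$ and logarithmically divergent at $\alpha=1/2$. Upgrading from $L^1$ to convergence in probability then goes through Chebyshev after bounding $\mathrm{Var}(R_n)$ via the joint moment formulae of Appendix \ref{appSaM}.

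For case (iii), since $\overline{Y}_n$ no longer concentrates, I would pass to a martingale via the forward Yule construction. Let $N(t)$ be the standard Yule process and set $S(t) := \sum_{i\le N(t)}(X_i(t) - \theta)$; combining the OU drift with the branching compensator gives $dS(t) = (1-\alpha)S(t)\,dt + dM(t)$ for a martingale $M$, so $W(t) := e^{-(1-\alpha)t}S(t)$ is itself a martingale. A direct ODE for $\mathbb{E}[S(t)^2]$ shows that $W$ is $L^2$-bounded precisely when $\alpha<1/2$, whence $W(t)\to W_\infty$ a.s.\ and in $L^2$. Combining this with the classical limit $N(t)e^{-t}\to\xi$ a.s., evaluating at the stopping time $\inf\{t:N(t)=n\}$, and transferring to the conditioned $n$-tree of Definition \ref{def} (harmless because the residual $T_n\sim\mathrm{Exp}(n)$ has negligible effect on $\overline{X}_n$), yields a.s.\ and $L^2$ convergence of $n^\alpha\overline{Y}_n$. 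The moments of $Y_{\alpha,\delta}$ are then obtained by passing to the limit in the closed-form expressions for $\mathbb{E}[\overline{Y}_n]$ and $\mathbb{E}[\overline{Y}_n^2]$ already derived.

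The main obstacle I anticipate is the control of second-order statistics of the coalescent matrix: the concentration $\mathrm{Var}(a_n^2 V_n)\to 0$ in cases (i)--(ii) and the $L^2$-boundedness of $W(t)$ in case (iii) both reduce to computing $\mathbb{E}\bigl[e^{-2\alpha(\tau_{ij}+\tau_{k\ell})}\bigr]$ for pairs of tip-pairs that may share a vertex or have nested most recent common ancestors. The delicate case analysis is exactly what the joint-moment apparatus of Appendix \ref{appSaM} is built to handle.
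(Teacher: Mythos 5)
Your reduction for parts (i) and (ii) is exactly the paper's: conditional Gaussianity of $\overline{Y}_n$ given the tree, the identity $\Vart{\overline Y_n}=n^{-1}-e^{-2\alpha U_n}+\frac{n-1}{n}\Et{e^{-2\alpha\tau^{(n)}}}$ (Lemma \ref{barY}), dominated convergence of the conditional characteristic function, and concentration of the two tree functionals. One concrete correction, though: the variance bound on your $R_n=\Et{e^{-2\alpha\tau^{(n)}}}$ is \emph{not} obtainable from Appendix \ref{appSaM}, which only records joint moments of $(U_n,\tau^{(n)})$ for a \emph{single} sampled pair. What you need is $\Expectation{e^{-2\alpha(\tau^{(n)}_1+\tau^{(n)}_2)}}$ for two \emph{independently resampled} pairs of tips, and this requires the $m$-subtree Markov chain of Lemmas \ref{MC}--\ref{new}; the paper devotes Lemma \ref{byt} to exactly this estimate, proving $\Var{\Et{e^{-2\alpha\tau^{(n)}}}}=O(n^{-3})$. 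Without some such two-pair computation your Chebyshev step has no input, so you should either reproduce that lemma or cite an equivalent bound; the ``delicate case analysis'' you anticipate is real and is where most of the work in the paper's proof of (i)--(ii) actually lives.

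For part (iii) you take a genuinely different route: a continuous-time branching Ornstein--Uhlenbeck martingale $W(t)=e^{-(1-\alpha)t}S(t)$ for the forward Yule process, followed by evaluation at the hitting time of $n$ tips and a transfer to the conditioned tree of Definition \ref{def}. The paper instead defines the discrete martingale $H_n=(n+1)e^{(\alpha-1)U_n}\overline Y_n$ directly on the filtration $\mathcal F_n$ of the conditioned Yule $n$-tree (Lemma \ref{Hma}), checks $\sup_n\Expectation{H_n^2}<\infty$ via Lemmas \ref{barY} and \ref{LapT}, and recovers $n^\alpha\overline Y_n\to V^{(\alpha-1)}H_\infty$ using the auxiliary martingale $V_n^{(x)}=b_{n,x}^{-1}e^{-xU_n}$ of Lemma \ref{Vma} (whose limit plays the role of your $\xi^{-(1-\alpha)}$). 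Your approach connects the result to the Adamczak--Mi{\l}o\'s branching-OU framework that the paper explicitly cites as an alternative, but it makes you pay for two transfer steps the paper avoids: identifying the forward-split topology with the bottom-up coalescent topology of Definition \ref{def}, and controlling the extra $\mathrm{Exp}(n)$ holding time at the tips (which is indeed negligible for $\alpha<1/2$, but needs an argument, e.g.\ an $L^2$ bound on the increment of $\overline Y_n$ over an interval of length $O(n^{-1})$). The paper's discrete martingale buys a shorter, self-contained proof; your route buys conceptual contact with the branching-process literature. Both are viable.
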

Let $z_{x}$ be the $x$-quantile of the standard normal distribution, and  $q_x$ be  the $x$-quantile of the limit $Y_{\alpha,\delta}$. Denote by $S_n$ the sample standard deviation defined as the square root of $S_n^2$. As it will be shown in Section \ref{Scons}, the sample variance $S_n^2$ is a consistent estimator of  ${\sigma^2\over2\alpha}$. This fact together with Theorem \ref{tMart} allows us to state the following three approximate 
$(1-x)$-level confidence intervals for $\theta$ assuming that we know the value of $\alpha$:

\begin{align*}
\mbox{for } \alpha>0.5\qquad   & \overline{X}_{n} \pm z_{1-x/2}\cdot S_{n}\cdot {K_{\alpha}\over\sqrt n} ,\quad K_{\alpha}=\sqrt{\frac{2\alpha+1}{2\alpha-1}},%\label{fact}
\\
\mbox{for } \alpha=0.5\qquad     & \overline{X}_{n} \pm z_{1-x/2}\cdot  S_{n}\cdot {\sqrt {2\ln n}\over\sqrt {n}},%\label{fact1}  
\\
\mbox{for } \alpha<0.5\qquad       &  (\overline{X}_{n} - q_{x/2}\cdot  S_{n}\cdot n^{-\alpha },\ \ \overline{X}_{n} + q_{1-x/2}\cdot S_{n}\cdot n^{-\alpha }).%\label{fact2}
\end{align*}
Notably, the first of these confidence intervals differs from the classical confidence interval for the mean 
$(\overline{X}_{n} \pm z_{1-x/2}S_{n}/\sqrt n)$ 
just by a factor $K_{\alpha}$. The latter is larger than 1, as it should, in view of a positive correlation among the 
sample observations.  The correction factor $K_{\alpha}$ becomes negligible in the case of a very strong adaptation, 
$\alpha\gg1$, when the dependence due to common ancestry can be neglected. 

\begin{remark}
 Observe that our standing assumption  $\lambda=1$, see Definition \ref{def}, of 
having one speciation event per unit of time causes no loss of generality.
To incorporate an arbitrary speciation rate  $\lambda$ one has to replace in our 
formulae parameters $\alpha$ and $\sigma^2$ by $\alpha/\lambda$ and $\sigma^2/\lambda$. 
This transformation corresponds to the time scaling by factor $\lambda$ in Eq. \eqref{eqOU}, 
it changes neither the optimal value $\theta$ nor the stationary variance $\sigma^2/(2\alpha)$.
\end{remark}

\section{Sampling $m$ leaves from the Yule $n$-tree}\label{Syou}
Here we consider the Yule $n$-tree, see Definition \ref{def} and study some properties of its 
subtree joining $m$ randomly (without replacement) chosen tips, where $m\in[2, n]$. 
In particular, we compute the joint Laplace 
transform of the height of the Yule $n$-tree
$U_n=T_1+\ldots+T_n$
and $\tau^{(n)}$,
the height of the most recent common ancestor for two randomly sampled tips, see Fig. \ref{tr}. 
For other results concerning the distribution of $\tau^{(n)}$ and $U_{n}$ see also
\cite{TGer2008a,TGer2008b,AMooetal,SSagKBar2012,TSta2008,TreeSim1,TStaMSte2012,MSteAMcK2001}.

\begin{lemma}\label{MC}
Consider a random $m$-subtree of the conditioned Yule $n$-tree. It has 
\mbox{
$m-1$} bifurcating events. 
Let  $K^{(n,m)}_1< \ldots< K^{(n,m)}_{m-1}$ be the 
consecutive numbers of the bifurcation events in the Yule $n$-tree (counted from the root toward the leaves) 
corresponding to the $m-1$ bifurcating events of the $m$-subtree. Put $K^{(n,m)}_0=0$ and $K^{(n,m)}_m=n$.
The sequence $(K^{(n,m)}_m,\ldots, K^{(n,m)}_0)$ forms a time inhomogeneous Markov chain with transition probabilities

\begin{align*}
{\rm P}(K^{(n,m)}_{j-1}=k|K^{(n,m)}_j=i)=p^{(j)}_{ik},\quad  1\le j<k<i\le n,
\end{align*}
where $p^{(1)}_{i,0}=1$ for  all $i\ge 1$, and
\begin{align*}
p^{(j)}_{ik}={j(j-1)\over ik}\prod_{l=k+1}^{i-1}{(l+1-j)(l+j)\over l^2},\quad  j=2,\ldots,m.
\end{align*}
\end{lemma}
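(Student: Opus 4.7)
The plan is to exploit the backward (coalescent) construction of the Yule $n$-tree built into Definition~\ref{def} and to couple it with the random uniform size-$m$ subsample at the tips. As $l$ decreases from $n$ to $1$, a uniformly chosen pair of the $l$ current branches merges. Call a branch \emph{marked} if it carries at least one sampled descendant, and let $L(l)$ denote the number of marked branches when there are $l$ branches in total. By exchangeability of the pair selection, conditional on $L(l)=L$ the next coalescence merges two marked branches with probability $L(L-1)/(l(l-1))$, and otherwise leaves the marked count unchanged. Hence $\{L(l)\}$ is a backward Markov chain whose one-step kernel depends only on $(l,L)$.

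Next, I would identify $K^{(n,m)}_j$ with the number of Yule branches present just before the forward $j$-th subtree bifurcation --- equivalently, the state just after the backward merge that decreases the marked count from $j+1$ to $j$. Because the sequence $(K^{(n,m)}_m,\ldots,K^{(n,m)}_0)$ is read off the trajectory of $\{L(l)\}$ precisely at the instants where $L$ jumps, the Markov property descends to it in the reversed direction of $j$, with the conventions $K^{(n,m)}_m=n$ and $K^{(n,m)}_0=0$ encoding the tip state and the post-MRCA state.

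To obtain $p^{(j)}_{ik}$, I would trace the unique possible trajectory between two consecutive subtree merges. Conditional on $K^{(n,m)}_j=i$, the state immediately after that merge is $(i,j)$; ending up with $K^{(n,m)}_{j-1}=k$ requires exactly $i-k-1$ non-subtree merges at branch counts $i,i-1,\ldots,k+2$ (each preserving the marked count $j$) followed by a subtree merge at branch count $k+1$. Using
\begin{displaymath}
\mathrm{P}(\text{non-subtree merge at }(l,j))=\frac{(l-j)(l+j-1)}{l(l-1)},\quad \mathrm{P}(\text{subtree merge at }(k+1,j))=\frac{j(j-1)}{(k+1)k},
\end{displaymath}
the probability of this trajectory equals $\frac{j(j-1)}{(k+1)k}\prod_{l=k+2}^{i}\frac{(l-j)(l+j-1)}{l(l-1)}$. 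A shift of the product index and the telescoping identity $\prod_{l=k+1}^{i-1}l/(l+1)=(k+1)/i$ rewrite this expression as the stated form of $p^{(j)}_{ik}$. The boundary case $j=1$ is degenerate: past the root-most subtree bifurcation the marked count is already $1$, so every remaining backward merge is automatically non-subtree and $p^{(1)}_{i,0}=1$.

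The main obstacle is purely bookkeeping: one must argue cleanly that sampling the two-coordinate chain $(l,L(l))$ at the jump times of its marked coordinate still yields a Markov chain whose transition kernel depends only on the triple $(i,j,k)$. Once that sampling-from-the-coalescent step is in place, the path enumeration and the algebraic rearrangement into the form of $p^{(j)}_{ik}$ are routine.
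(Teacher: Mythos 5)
Your proposal is correct and follows essentially the same route as the paper: the paper also traces the sampled lineages backward through the coalescent and treats each merge as a Bernoulli trial with success probability $\binom{j}{2}/\binom{l}{2}$ (stated there as $\binom{m}{2}/\binom{l}{2}$ for the first transition, with the remaining transitions "obtained similarly"), then performs the same index shift and telescoping to reach the stated form of $p^{(j)}_{ik}$. The only difference is presentational: you make explicit the two-coordinate chain $(l,L(l))$ and the sampling-at-jump-times step that the paper leaves implicit.
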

\begin{proof}
Tracing the lineages  of $m$ randomly sampled tips of the Yule $n$-tree towards the root, 
the first coalescent event can be viewed as the success in a sequence of independent Bernoulli trials. 
This argument leads to the expression cf \cite{TreeSim1}

\begin{align*}
{\rm P}(K^{(n,m)}_{m-1}=k|K^{(n,m)}_m=n)&=\left(1-{{m\choose2}\over{n\choose2}}\right)\cdots \left(1-{{m\choose2}\over{ k+2\choose2}}\right){{m\choose2}\over{ k+1\choose2}}\\
&={m(m-1)\over nk}\prod_{i=k+1}^{n-1}{(i+1-m)(i+m)\over i^2},\quad  k=m-1,\ldots,n-1,
\end{align*}
confirming the formula stated for the transition probabilities $p^{(m)}_{nk}$. The transition probabilities $p^{(j)}_{ik}$ for $j=2,\ldots,m-1$ are obtained similarly. Notice, as a check, that \mbox{
$p^{(j)}_{j,j-1}=1$}.
\end{proof}
\begin{lemma}\label{new}
Consider the inter-bifurcation times for the $m$-subtree of the Yule $n$-tree

\begin{align*}
\chi^{(n,m)}_{j}&=T_{K^{(n,m)}_{j-1}+1}+\ldots+T_{K^{(n,m)}_{j}},\quad j=1,\ldots,m,
\end{align*}
so that $U_n=\chi^{(n,m)}_{1}+\ldots+\chi^{(n,m)}_{m}$ for any $m\le n$, and $\tau^{(n)}=\chi^{(n,2)}_{2}$. 
Then for $x_j>-1$ we have

\begin{align*}
&\Expectation{\exp\Big\{-\sum\limits_{j=1}^{m}x_{j}  \chi^{(n,m)}_{j}\Big\}}   = \sum_{k_1=1}^{n-1}\sum_{k_2=k_1+1}^{n-1}\ldots\sum_{k_{m-1}=k_{m-2}+1}^{n-1}\prod_{j=1}^m
p^{(j)}_{k_j,k_{j-1}}{b_{k_j, x_{j-1}}\over b_{k_{j-1},x_{j-1}}},
\end{align*}
where $k_m=n$, $k_0=0$, and 
\[
b_{n,x}={1\over 1+x}\cdot{2\over 2+x}\cdot \ldots \cdot{n\over n+x}={\Gamma(n+1) \Gamma(x+1)\over  \Gamma(n+x+1)},\quad x>-1.
\]
\end{lemma}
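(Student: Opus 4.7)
The plan is to decouple the two sources of randomness in $\chi^{(n,m)}_j$: the topology, encoded by the Markov chain $(K^{(n,m)}_0,\ldots,K^{(n,m)}_m)$ of Lemma \ref{MC}, and the metric, encoded by the independent exponentials $T_1,\ldots,T_n$. The key observation is that these two collections are independent. Indeed, by Definition \ref{def} the durations $T_k$ and the successive random choices of which two branches coalesce are independent ingredients of the construction, and the uniform selection of $m$ tips out of $n$ does not interact with the inter-speciation times. Consequently, conditionally on $(K^{(n,m)}_{j-1},K^{(n,m)}_j)=(k_{j-1},k_j)$ we have
\[
\chi^{(n,m)}_j \stackrel{d}{=} T_{k_{j-1}+1}+\ldots+T_{k_j},
\]
with the $T_l$'s still independent Exp$(l)$ variables, and moreover for different $j$ the relevant blocks of indices are disjoint, so the $\chi^{(n,m)}_j$'s are conditionally independent given the whole chain.

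Given this, I would condition on $K^{(n,m)}$ and use the elementary Laplace transform $\Expectation{e^{-xT_l}}=l/(l+x)$ (valid whenever $x>-1\ge-l$) block by block. The conditional Laplace transform factorises as
\[
\Expectation{\exp\Big\{-\sum_{j=1}^m x_j\chi^{(n,m)}_j\Big\}\Big|\, K^{(n,m)}} = \prod_{j=1}^m \prod_{l=K^{(n,m)}_{j-1}+1}^{K^{(n,m)}_j} \frac{l}{l+x_j}.
\]
Each inner product telescopes via the $b$-notation: $\prod_{l=k_{j-1}+1}^{k_j} l/(l+x_j) = b_{k_j,x_j}/b_{k_{j-1},x_j}$, with the convention $b_{0,x}=1$ taking care of the boundary index $K^{(n,m)}_0=0$.

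It then remains to take the outer expectation. By the Markov property from Lemma \ref{MC}, the joint law of the path $(K^{(n,m)}_m,\ldots,K^{(n,m)}_0)=(n,k_{m-1},\ldots,k_1,0)$ is the product $\prod_{j=1}^m p^{(j)}_{k_j,k_{j-1}}$, and summing against this weight over all admissible strictly increasing sequences $1\le k_1<k_2<\ldots<k_{m-1}\le n-1$ (with $k_0=0$, $k_m=n$) yields the claimed formula.

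I do not expect any real technical obstacle. The only two things to be careful about are (a) justifying cleanly the independence of the topology-valued chain and the exponential durations, which is really a statement about the Yule construction rather than a calculation, and (b) the boundary bookkeeping at $j=1$ and $j=m$, which is handled by the conventions $K^{(n,m)}_0=0$, $K^{(n,m)}_m=n$, $p^{(1)}_{i,0}=1$, and $b_{0,x}=1$.
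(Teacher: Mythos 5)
Your proposal is correct and takes essentially the same route as the paper: condition on the Markov chain of Lemma \ref{MC}, factorize the Laplace transform of the independent exponentials over the disjoint blocks into the telescoping ratios $b_{k_j,x_j}/b_{k_{j-1},x_j}$, and sum against the path probabilities $\prod_j p^{(j)}_{k_j,k_{j-1}}$. Note only that the paper's displayed statement attaches the argument $x_{j-1}$ to the $j$-th block while you attach $x_j$; your indexing is the one consistent with the way the formula is actually used in the proof of Lemma \ref{jLa}.
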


\begin{proof} 
The Laplace transform of the sum of independent exponentials:

\begin{align*}
&\Expectation{\exp\{-x_1  \chi^{(n,m)}_{0}-\ldots-x_{m} \chi^{(n,m)}_{m}\}\vert(K^{(n,m)}_{m-1},\ldots K^{(n,m)}_1)=(k_{m-1},\ldots k_1)}  \\
&\qquad \qquad \qquad \qquad \qquad \qquad \qquad \qquad = \prod_{j=1}^m{k_{j-1}+1\over x_{j-1}+k_{j-1}+1}\cdots{ k_j\over x_{j-1}+k_j}
\end{align*}
together with Lemma \ref{MC} implies the stated equality

\begin{align*}
&%\Expectation{\exp\{-x_0  \tau^{(n,m)}_{0}-\ldots-x_{k-1} \tau^{(n,m)}_{m-1}\}}  
\Expectation{\exp\{-\sum\limits_{j=1}^{m}x_{j}  \chi^{(n,m)}_{j}\}}  
\\&\qquad \qquad 
= 
%\sum_{1\le k_1<\ldots<k_{m-1}<n}
\sum_{k_1=1}^{n-1}\sum_{k_2=k_1+1}^{n-1}\ldots\sum_{k_{m-1}=k_{m-2}+1}^{n-1}
\prod_{j=1}^mp^{(j)}_{k_j,k_{j-1}}{k_{j-1}+1\over x_{j-1}+k_{j-1}+1}\cdots{ k_j\over x_{j-1}+k_j}.
\end{align*}

\end{proof}
\begin{lemma}\label{jLa} The joint Laplace 
transform of the height of the Yule $n$-tree
and 
the height of the most recent common ancestor for two randomly sampled tips is given by
 \[
\Expectation{e^{-xU_n-y\tau^{(n)}}} ={2(n+1)b_{n,x+y}\over (n-1)}\sum_{k=1}^{n-1}{b_{k,x}\over (k+2)(k+1)b_{k,x+y}}.
\]
In particular, 
\be \label{eqLapT}\Expectation{e^{-xU_n}} = b_{n,x},\ee
\be\Var{e^{-xU_n}} = b_{n,2x}-b_{n,x}^2,\ee 
and, denoting the harmonic number
$h_n:=1+1/2+\ldots+1/n$,

%\be
\begin{align}\label{eqLaptau}
\Expectation{e^{-y\tau^{(n)}}}  =
\left\{
\begin{array}{ll}
{2-(n+1)(y+1)b_{n,y}\over(n-1)(y-1)},  & \mbox{ for }  y\neq 1,   \\
 \frac{2}{n-1}(h_n-1)-\frac{1}{n+1} , &   \mbox{ for } y=1. 
\end{array}
\right.
\end{align}
\end{lemma}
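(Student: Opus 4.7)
The plan is to derive all four displayed identities from Lemma \ref{new} specialized to $m=2$, combined with an explicit evaluation of the transition probability $p^{(2)}_{n,k}$ and a telescoping sum identity. Since for $m=2$ one has $U_{n}=\chi^{(n,2)}_{1}+\chi^{(n,2)}_{2}$ and $\tau^{(n)}=\chi^{(n,2)}_{2}$, taking $x_{1}=x$ and $x_{2}=x+y$ in Lemma \ref{new} converts $\sum_{j}x_{j}\chi^{(n,2)}_{j}$ into $xU_{n}+y\tau^{(n)}$, so the joint Laplace transform reduces to a single-index sum over $k=K^{(n,2)}_{1}$.

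First I would simplify $p^{(2)}_{n,k}=\frac{2}{nk}\prod_{l=k+1}^{n-1}\frac{(l-1)(l+2)}{l^{2}}$ from Lemma \ref{MC}. The three products $\prod(l-1)$, $\prod(l+2)$, $\prod l^{2}$ each telescope to ratios of factorials; after cancellation one obtains
\[
p^{(2)}_{n,k}=\frac{2(n+1)}{(n-1)(k+1)(k+2)},\qquad 1\le k\le n-1.
\]
Conditioning on $K^{(n,2)}_{1}=k$ and using independence of $T_{1},\ldots,T_{n}$, the conditional Laplace transform of $xU_{n}+y\tau^{(n)}$ factors as $b_{k,x}\cdot b_{n,x+y}/b_{k,x+y}$. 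Weighting by $p^{(2)}_{n,k}$ and summing over $k$ gives the joint Laplace transform exactly as stated.

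The marginal $\Expectation{e^{-xU_{n}}}=b_{n,x}$ follows most directly from independence of the exponential summands in $U_{n}=T_{1}+\ldots+T_{n}$, whose individual Laplace transforms are $k/(k+x)$; the variance formula is then immediate from $\Expectation{e^{-2xU_{n}}}=b_{n,2x}$. For the marginal of $\tau^{(n)}$, set $x=0$ in the joint formula; using $b_{k,y}=k!/(y+1)_{k}$ to rewrite the summand as $(y+1)_{k}/(k+2)!$, the main technical step is to recognise, for $y\ne1$, the telescoping identity
\[
\frac{(y+1)_{k}}{(k+2)!}=g(k+1)-g(k),\qquad g(k):=\frac{(y+1)_{k}}{(y-1)(k+1)!},
\]
verified by a direct rising-factorial computation using $(y+1)_{k+1}=(y+k+1)(y+1)_{k}$. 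The telescoped sum equals $g(n)-g(1)$, which on combining with the prefactor $2(n+1)b_{n,y}/(n-1)$ and the identity $(y+1)_{n}=n!/b_{n,y}$ collapses to $(2-(n+1)(y+1)b_{n,y})/((y-1)(n-1))$.

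The case $y=1$ is a $0/0$ form handled by l'H\^opital's rule in $y$. Using the logarithmic derivative $b'_{n,y}/b_{n,y}=\psi(y+1)-\psi(n+y+1)$ and the standard evaluation $\psi(2)-\psi(n+2)=1-h_{n+1}$, differentiating numerator and denominator at $y=1$ produces the stated harmonic-number expression after routine algebraic simplification. The main obstacle in the whole argument is guessing the telescoping antiderivative $g(k)$ in the $\tau^{(n)}$ marginal; every other step is routine bookkeeping once Lemma \ref{new} and the closed form for $p^{(2)}_{n,k}$ are in hand.
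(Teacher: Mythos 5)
Your proposal is correct and follows essentially the same route as the paper: Lemma \ref{new} with $m=2$, the closed form $p^{(2)}_{n,k}=\frac{2(n+1)}{(n-1)(k+1)(k+2)}$, and the same gamma-ratio summation identity for the $\tau^{(n)}$ marginal (your telescoping antiderivative $g$ is exactly the induction behind the paper's Eq.~\eqref{geq}). The only cosmetic differences are that the paper evaluates the $y=1$ case directly from the sum rather than by a limiting argument, and obtains $\Expectation{e^{-xU_n}}=b_{n,x}$ by setting $y=0$ in the joint formula rather than from independence of the $T_i$.
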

\begin{proof}
Turning to Lemma \ref{MC} with $m=2$ we get

\begin{align*}%\label{pi}
p^{(2)}_{n,k}&={2\over nk}\prod_{i=k+1}^{n-1}{(i-1)(i+2)\over i^2}={2(n+1)\over (n-1)(k+2)(k+1)},\quad k=1,\ldots,n-1,
\end{align*}
and according to Lemma \ref{new}

\begin{align}\label{jL}
\Expectation{e^{-x (U_n-\tau^{(n)})-y \tau^{(n)}}}  & = \sum_{k=1}^{n-1}p^{(2)}_{n,k}{1\over x+1}\cdots{k\over x+k}{k+1\over y+k+1}\cdots{ n\over y+ n}\nonumber\\
&={2(n+1)b_{n,y}\over (n-1)}\sum_{k=1}^{n-1}{b_{k,x}\over (k+2)(k+1)b_{k,y}}.
\end{align}
This implies the main formula claimed by Lemma \ref{jLa} giving $\Expectation{e^{-xU_n}} = b_{n,x}$ after putting $y=0$.
With $x=0$,

\begin{align*}
\Expectation{e^{-y \tau^{(n)}}}  &={2(n+1)b_{n,y}\over (n-1)}\sum_{k=1}^{n-1}{1\over (k+2)(k+1)b_{k,y}}\\
&={2(n+1)!\over (n-1)\Gamma(y+n+1)}\sum_{k=1}^{n-1}{\Gamma(k+1+y)\over \Gamma(k+3)}.
\end{align*}
When $y=1$ this directly becomes
\bd
\Expectation{e^{- \tau^{(n)}}}= \frac{2}{n-1}(h_n-1)-\frac{1}{n+1}.
\ed
In the case of $y\neq 1$ we use the following relation (easily verified by induction when $z\neq y$)

\be
\sum_{k=1}^{n-1}{\Gamma(k+y)\over \Gamma(k+z+1)}={ \Gamma(n+z)\Gamma(y+1)-\Gamma(z+1)\Gamma(n+y)\over \Gamma(z+1)\Gamma(n+z)(z-y)}
\label{geq}
\ee
to derive 

\begin{align*}
\Expectation{e^{-y \tau^{(n)}}}  
&={ 2\Gamma(n+1+y)-\Gamma(n+2)\Gamma(y+2)\over (n-1)\Gamma(y+n+1)(y-1)}={2-(n+1)(y+1)b_{n,y}\over(n-1)(y-1)}.
\end{align*}
\end{proof}

\begin{lemma}\label{LapT}
As $n\to\infty$ for positive $x$ and $y$ we have the following asymptotic results

\begin{align*}
\Expectation{e^{-xU_n}} &\sim  \Gamma(x+1)n^{-x},\\
\Expectation{e^{-y\tau^{(n)}}} &\sim \left\{ 
\begin{array}{ll}
\frac{1+y}{1-y}\Gamma(y+1)\cdot n^{-y}
,&\mbox{if } 0<y<1,\\
2n^{-1}\ln n
,&\mbox{if } y=1,\\
{2\over y-1}n^{-1}
,&\mbox{if } y>1,
\end{array}
\right.
%\label{Laptau}
\\
\Expectation{e^{-xU_n-y\tau^{(n)}}} &\sim \left\{ 
\begin{array}{ll}
C_{x,y} n^{-x-y}
,&\mbox{if } 0<y<1,\\
2\Gamma(x+1)n^{-x-1}\ln n
,&\mbox{if } y=1,\\
{2\Gamma(x+1)\over y-1} \cdot n^{-x-1}
,&\mbox{if } y>1,
\end{array}
\right.
%\label{ajLa}
\end{align*}
where 
\[C_{x,y}=2\Gamma(x+y+1)\sum_{k=1}^{\infty}{b_{k,x}\over (k+2)(k+1)b_{k,x+y}}.\]
 \end{lemma}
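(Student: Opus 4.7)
The plan is to reduce everything to the identity \eqref{eqLapT}, the explicit formula \eqref{eqLaptau}, and the joint formula from Lemma \ref{jLa}, and then apply the standard gamma-ratio asymptotic
\[
{\Gamma(n+1)\over \Gamma(n+x+1)}\sim n^{-x},\qquad n\to\infty,
\]
which yields $b_{n,x}\sim \Gamma(x+1)n^{-x}$ for every $x>-1$. The first claim is immediate from \eqref{eqLapT}.

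For the second claim I would split according to the value of $y$. In the critical case $y=1$, the asymptotic $\Expectation{e^{-\tau^{(n)}}}\sim 2n^{-1}\ln n$ is read off directly from \eqref{eqLaptau} using $h_n\sim \ln n$. In the remaining cases I would use the formula
\[
\Expectation{e^{-y\tau^{(n)}}}={2-(n+1)(y+1)b_{n,y}\over(n-1)(y-1)},
\]
together with $(n+1)b_{n,y}\sim \Gamma(y+1)n^{1-y}$. For $0<y<1$ this term dominates the constant $2$, producing the stated $n^{-y}$ decay with prefactor $\tfrac{1+y}{1-y}\Gamma(y+1)$; for $y>1$ the term vanishes and the constant $2$ is what matters, giving the $\tfrac{2}{y-1}n^{-1}$ asymptotic.

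For the joint Laplace transform I would write
\[
\Expectation{e^{-xU_n-y\tau^{(n)}}}={2(n+1)b_{n,x+y}\over n-1}\cdot S_n(x,y),\qquad
S_n(x,y):=\sum_{k=1}^{n-1}{b_{k,x}\over (k+2)(k+1)\,b_{k,x+y}}.
\]
The prefactor is asymptotic to $2\Gamma(x+y+1)n^{-x-y}$, while the $k$-th summand of $S_n$ satisfies
\[
{b_{k,x}\over (k+2)(k+1)\,b_{k,x+y}}\sim {\Gamma(x+1)\over \Gamma(x+y+1)}\,k^{y-2},\qquad k\to\infty.
\]
Three cases arise. For $0<y<1$ the series is absolutely convergent, so $S_n$ tends to a finite limit and, after multiplication by the prefactor, one recovers exactly $C_{x,y}n^{-x-y}$. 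For $y=1$ the summand is asymptotic to $\tfrac{1}{(x+1)k}$, hence $S_n\sim \tfrac{\ln n}{x+1}$; combined with the prefactor and the identity $\Gamma(x+2)=(x+1)\Gamma(x+1)$ this yields $2\Gamma(x+1)n^{-x-1}\ln n$. For $y>1$, the elementary estimate $\sum_{k=1}^{n-1}k^{y-2}\sim n^{y-1}/(y-1)$ gives $S_n\sim \Gamma(x+1)n^{y-1}/\bigl((y-1)\Gamma(x+y+1)\bigr)$, and the prefactor contributes the remaining factors to produce $\tfrac{2\Gamma(x+1)}{y-1}n^{-x-1}$.

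The only nontrivial step is the passage from summand asymptotics to sum asymptotics inside $S_n$. For $y<1$ this is handled painlessly by dominated convergence, since the summand is $O(k^{y-2})$ with $y-2<-1$. For $y\ge 1$ the divergence of $\sum k^{y-2}$ forces one to invoke the standard Cesàro-type summation fact: if $a_k\sim c\,k^{y-2}$ and $\sum k^{y-2}=\infty$, then $\sum_{k=1}^{n-1}a_k\sim c\sum_{k=1}^{n-1}k^{y-2}$. This is the main (though still elementary) obstacle; once it is in place the three regimes follow by straightforward bookkeeping with gamma functions.
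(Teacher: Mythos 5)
Your proposal is correct and follows essentially the same route as the paper: both reduce to the explicit formulae of Lemma \ref{jLa} and apply $b_{n,x}\sim\Gamma(x+1)n^{-x}$. You are in fact more explicit than the paper about the only delicate step --- passing from termwise asymptotics of the positive summands in $S_n(x,y)$ to the asymptotics of the sum in the divergent regimes $y\ge 1$ --- which the paper's one-line proof leaves implicit.
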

\begin{proof}
The stated results are obtained from Lemma \ref{jLa} using the first of the following three asymptotic properties of the function $b_{n,x}$

\begin{align*}
&b_{n,x}\sim  \Gamma(x+1)n^{-x}, \ n\to\infty,  %\label{b3}
\\
&{1-(n+1)b_{n,x}\over x-1}\to h_n-1, \ x\to1,%&\label{b1}
\\
&x^{-1}(1-b_{n,x})\to h_n, \ x\to0. %\label{b2}
\end{align*}
These three relations will often be used tacitly in what follows.
\end{proof}

\section{Interspecies correlation}
\label{Sout}
Denote by $\mathcal Y_n$ the $\sigma$--algebra containing all information on the Yule $n$-tree. 
The scaled trait values $Y_{i}^{(n)}:={X_i^{(n)}-\theta\over\sigma/\sqrt{2\alpha}}$, in view of Eq. \eqref{solOU}, are conditionally
normal with

\begin{align*}
\Et{Y_i^{(n)}}&=\delta e^{-\alpha U_n}, %\label{ey}
\\
\Vart{Y_i^{(n)}}&=1-e^{-2 \alpha U_n},%\label{vy}
\end{align*}
which together with the results from Section \ref{Syou} entails

\begin{align*}
\Expectation{Y_i^{(n)}}&=\delta b_{n,\alpha}, 
\\
\Var{Y_i^{(n)}}&=1-\delta^2b_{n,\alpha}^2+(\delta^2-1)b_{n,2\alpha}.
\end{align*}
\begin{lemma}\label{cy}
In the framework of the Yule-Ornstein-Uhlenbeck model, for an arbitrary pair of traits we have

\begin{align*}
\covt{Y_i^{(n)}}{Y_j^{(n)}}&=e^{-2\alpha \tau_{ij}^{(n)}}-e^{-2 \alpha U_n},
\end{align*}
where $\tau_{ij}^{(n)}$ is the backward time to the most recent common ancestor of the tips $(i,j)$.
\end{lemma}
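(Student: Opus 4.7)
The key observation is that conditional on the tree $\mathcal Y_n$, the lineages leading to tips $i$ and $j$ share an identical OU trajectory from the root (at time $0$) up until the most recent common ancestor (at forward time $U_n-\tau_{ij}^{(n)}$), and then evolve as two conditionally independent OU processes over the remaining time $\tau_{ij}^{(n)}$. I would build the proof around this split.

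First, I would introduce the trait value at the MRCA, $Y^\ast:=Y(U_n-\tau_{ij}^{(n)})$, and invoke the Markov property of the OU process: given $\mathcal Y_n$ and $Y^\ast$, the variables $Y_i^{(n)}$ and $Y_j^{(n)}$ are conditionally independent and each has the transition distribution
\baa
\Expectation{Y_i^{(n)}\vert \mathcal Y_n, Y^\ast} &= Y^\ast e^{-\alpha\tau_{ij}^{(n)}},\\
\Var{Y_i^{(n)}\vert \mathcal Y_n, Y^\ast} &= 1-e^{-2\alpha\tau_{ij}^{(n)}},
\eaa
which is simply the scaled version of \eqref{solOU} applied over the time interval $\tau_{ij}^{(n)}$.

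Next, I would apply the conditional law of total covariance (tower property):
\baa
\covt{Y_i^{(n)}}{Y_j^{(n)}}
= \Expectation{\covariance{Y_i^{(n)}}{Y_j^{(n)}}\vert \mathcal Y_n, Y^\ast \,}\vert \mathcal Y_n]
+ \covt{\Expectation{Y_i^{(n)}\vert \mathcal Y_n, Y^\ast}}{\Expectation{Y_j^{(n)}\vert \mathcal Y_n, Y^\ast}}.
\eaa
The first term vanishes by conditional independence. The second equals $e^{-2\alpha\tau_{ij}^{(n)}}\cdot\Vart{Y^\ast}$, and since $Y^\ast$ is the scaled OU value at forward time $U_n-\tau_{ij}^{(n)}$, the already-established formula for $\Vart{Y_i^{(n)}}$ applied to this shorter interval yields $\Vart{Y^\ast}=1-e^{-2\alpha(U_n-\tau_{ij}^{(n)})}$. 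Multiplying out gives the asserted expression $e^{-2\alpha\tau_{ij}^{(n)}}-e^{-2\alpha U_n}$.

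I do not anticipate a real obstacle; the main subtlety is merely a bookkeeping one — keeping straight that $\tau_{ij}^{(n)}$ is the \emph{backward} time to the MRCA whereas the OU transition formulae are naturally written in forward time, so that the shared ancestral segment has length $U_n-\tau_{ij}^{(n)}$ and the two independent segments each have length $\tau_{ij}^{(n)}$. A completely equivalent alternative I might use instead is the Itô-integral representation $Y_i^{(n)}=\delta e^{-\alpha U_n}+\sqrt{2\alpha}\int_0^{U_n}e^{-\alpha(U_n-s)}\,\ud B^{(i)}(s)$, where the driving Brownians $B^{(i)},B^{(j)}$ coincide on $[0,U_n-\tau_{ij}^{(n)}]$ and are independent afterwards; then Itô isometry on the shared portion gives $\covt{Y_i^{(n)}}{Y_j^{(n)}}=2\alpha\int_0^{U_n-\tau_{ij}^{(n)}}e^{-2\alpha(U_n-s)}\,\ud s$, which evaluates to the same answer.
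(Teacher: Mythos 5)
Your proposal is correct and follows essentially the same route as the paper: the authors likewise condition on the trait value $Y_{ij}^{(n)}$ at the most recent common ancestor, note that the two tip values are then conditionally independent with common conditional mean $e^{-\alpha\tau_{ij}^{(n)}}Y_{ij}^{(n)}$, and reduce the conditional covariance to $\Vart{e^{-\alpha\tau_{ij}^{(n)}}Y_{ij}^{(n)}}=e^{-2\alpha\tau_{ij}^{(n)}}(1-e^{-2\alpha(U_n-\tau_{ij}^{(n)})})$. Your law-of-total-covariance phrasing and the Ito-isometry alternative are just equivalent packagings of that same computation.
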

\begin{proof}
 Denote by $ Y_{ij}^{(n)}$ the normalized trait value of the most recent common ancestor of the tips $(i,j)$. 
Let $\mathcal Y_{ij}^{(n)}$ stand for the $\sigma$--algebra generated by $(\mathcal Y_n,Y_{ij}^{(n)})$, then using Eq. \eqref{solOU} we get
\[\covariance{Y_i^{(n)}}{Y_j^{(n)}|\mathcal Y_{ij}^{(n)}}=0,\qquad \Expectation{Y_i|\mathcal Y_{ij}^{(n)}}=\Expectation{Y_j|\mathcal Y_{ij}^{(n)}}=e^{-\alpha \tau_{ij}^{(n)}}Y_{ij}^{(n)},\]
implying the statement of this lemma

\begin{align*}
\covt{Y_i^{(n)}}{Y_j^{(n)}}&=\Vart{e^{-\alpha \tau_{ij}^{(n)}}Y_{ij}^{(n)}}
\\
&=e^{-2\alpha \tau_{ij}^{(n)}}(1-e^{-2 \alpha (U_n-\tau_{ij}^{(n)})})=e^{-2\alpha \tau_{ij}^{(n)}}-e^{-2 \alpha U_n}.
\end{align*}
 \end{proof}

\begin{lemma} \label{ro}
Consider the interspecies correlation coefficient, the unconditioned correlation between two randomly sampled trait values
 \[ \rho_{n} ={1\over n(n-1)}\sum_{i}\sum_{j\ne j}{\covariance{X_i^{(n)}}{X_j^{(n)}}\over\Var{X_1^{(n)}}}={1\over n(n-1)}\sum_{i}\sum_{j\ne j}{\covariance{Y_i^{(n)}}{Y_j^{(n)}}\over\Var{Y_1^{(n)}}}. \]
If $\alpha \neq 0.5$, then
\[
\rho_{n} = 1-\frac{2\alpha(n-1)+(n+1)((1+2\alpha)b_{n,2\alpha}-1)}{(n-1)(2\alpha-1)(1+(\delta^2-1)b_{n,2\alpha} -\delta^2b_{n,\alpha}^2 )},
\]
and in the case of $\alpha=0.5$
\[
\rho_{n} = 1-\frac{n+1}{n-1}\frac{n+2-2h_n}{n+\delta^{2}(1-(n+1)b_{n,0.5}^{2})}.
\]
\end{lemma}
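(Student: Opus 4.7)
The plan is to reduce the interspecies correlation to a ratio involving a single Laplace transform, namely $\Expectation{e^{-2\alpha\tau^{(n)}}}$, and then to plug in the explicit formulas from Lemma \ref{jLa}. First I would apply the law of total covariance to each pair $(i,j)$ with $i\ne j$:
\baa
\covariance{Y_i^{(n)}}{Y_j^{(n)}} &= \Expectation{\covt{Y_i^{(n)}}{Y_j^{(n)}}} + \covariance{\Et{Y_i^{(n)}}}{\Et{Y_j^{(n)}}}.
\eaa
The conditional covariance is supplied by Lemma \ref{cy} as $e^{-2\alpha\tau_{ij}^{(n)}}-e^{-2\alpha U_n}$. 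The conditional means all equal $\delta e^{-\alpha U_n}$ (they do not depend on~$i$), so the second term collapses to $\delta^2\Var{e^{-\alpha U_n}}=\delta^2(b_{n,2\alpha}-b_{n,\alpha}^2)$, using \eqref{eqLapT}.

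Next I would exploit the fact that sampling a uniform ordered pair $(i,j)$ from the $n$ tips independently of $\mathcal Y_n$ yields, by definition, the variable $\tau^{(n)}$ studied in Lemma \ref{jLa}. Therefore
\baa
\frac{1}{n(n-1)}\sum_{i}\sum_{j\ne i} e^{-2\alpha \tau_{ij}^{(n)}} = \Et{e^{-2\alpha\tau^{(n)}}},
\eaa
and on taking the outer expectation the pairwise average of covariances becomes
\baa
\frac{1}{n(n-1)}\sum_{i}\sum_{j\ne i}\covariance{Y_i^{(n)}}{Y_j^{(n)}} = \Expectation{e^{-2\alpha\tau^{(n)}}}-b_{n,2\alpha}+\delta^2(b_{n,2\alpha}-b_{n,\alpha}^2).
\eaa
Dividing by $\Var{Y_1^{(n)}}=1+(\delta^2-1)b_{n,2\alpha}-\delta^2 b_{n,\alpha}^2$ and simplifying $1-\rho_n$ rather than $\rho_n$, the $\delta^2$ contributions in the numerator cancel against those generated by the variance, leaving the clean identity
\baa
1-\rho_n = \frac{1-\Expectation{e^{-2\alpha\tau^{(n)}}}}{1+(\delta^2-1)b_{n,2\alpha}-\delta^2 b_{n,\alpha}^2}.
\eaa

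It then remains to insert the explicit Laplace transform from \eqref{eqLaptau}. For $\alpha\ne 0.5$ one substitutes $y=2\alpha$ into the first branch of \eqref{eqLaptau}; after combining over the common denominator $(n-1)(2\alpha-1)$ and using the identity $(n+1)(2\alpha+1)+(n-1)(2\alpha-1)=2(2\alpha n+1)$, one recovers the numerator $2\alpha(n-1)+(n+1)((1+2\alpha)b_{n,2\alpha}-1)$. For $\alpha=0.5$ one uses the $y=1$ branch of \eqref{eqLaptau} together with $b_{n,1}=1/(n+1)$; bringing $\text{Var}[Y_1^{(n)}]$ over the common denominator~$n+1$ yields the factor $n+\delta^2(1-(n+1)b_{n,0.5}^2)$, while the numerator produces the term involving $n+2-2h_n$.

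The computations are essentially all algebraic. The only subtle step is the symmetrization argument in the second paragraph, which requires recognizing that averaging over ordered pairs from a labeled Yule $n$-tree is distributionally identical to sampling a pair uniformly at random from an unlabeled tree --- a consequence of the exchangeability of the tip labels under the Yule conditioning. The subsequent algebra in the $\alpha=0.5$ case, where one must verify the cancellation producing exactly $n+2-2h_n$ in the numerator, is the longest manipulation but presents no conceptual difficulty.
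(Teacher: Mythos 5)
Your proposal follows the paper's proof essentially verbatim: Lemma \ref{cy} together with exchangeability of the tips yields $1-\rho_n = \bigl(1-\Expectation{e^{-2\alpha\tau^{(n)}}}\bigr)/\Var{Y_1^{(n)}}$, and the explicit formulae then come from substituting the Laplace transforms of Lemma \ref{jLa}, exactly as the paper does. One small caveat concerns the $\alpha=0.5$ case: evaluating the $y=1$ branch carefully gives $1-\Expectation{e^{-\tau^{(n)}}}=\tfrac{n+2-2h_{n+1}}{n-1}$, so the numerator should read $n+2-2h_{n+1}$ rather than $n+2-2h_n$ (check $n=2$, $\delta=0$, where $\rho_2=1/2$) — a slip in the paper's stated formula rather than in your method, and asymptotically immaterial.
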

\begin{proof}
 According to Lemma \ref{cy} we have,

\begin{align*}
{2\over n(n-1)}\sum_{i<j}\covariance{Y_i^{(n)}}{Y_j^{(n)}}&=\Expectation{e^{-2\alpha \tau^{(n)}}-e^{-2 \alpha U_n}} + \delta^{2}\Var{e^{-\alpha U_n}}%\label{cov}
\end{align*}
leading to 
\[
 \rho_{n} = 1-\frac{1-\Expectation{e^{-2\alpha \tau^{(n)}}} }{1-\Expectation{e^{-2\alpha U_n}}+\delta^{2}\Var{e^{-\alpha U_n}}}.%\label{ron}
\]
Applying the results of Section \ref{Syou} we arrive at the asserted relations for $\rho_{n}$. Observe that asymptotically as $n\to\infty$ the interspecies correlation coefficient decays to $0$ as
\[
\rho_{n} \sim \left\{
\begin{array}{ll}
%c_{\alpha,\delta}
\frac{2}{1-2\alpha}\Gamma(1+2\alpha)+\delta^{2}\left(\Gamma(1+2\alpha)-\Gamma^{2}(\alpha+1)\right)n^{-2\alpha} & 0<\alpha<0.5,  \\
2n^{-1}\ln n& \alpha=0.5, \\
\frac{2}{2\alpha-1}n^{-1}& \alpha>0.5.
\end{array}
\right.
\]
\end{proof}
\begin{lemma}\label{D}
Consider the sample mean $\overline{Y}_{n}=n^{-1}(Y_1^{(n)}+\ldots+Y_n^{(n)})$ and the sample variance 
\[
D_{n}^2=\frac{1}{n-1}\sum\limits_{i=1}^{n} (Y_{i}^{(n)}-\overline{Y}_{n})^{2},
\]
of the scaled trait values. For all $\alpha >0$ we have $\Expectation{\overline{Y}_{n}}=\delta b_{n,\alpha}$. For  $\alpha \neq 0.5$

\begin{align*}
\Var{\overline{Y}_{n}}&={1+2\alpha-(4\alpha n+1+2\alpha)b_{n,2\alpha}\over(2\alpha-1)n}+\delta^2(b_{n,2\alpha}-b_{n,\alpha}^2),\\
\Expectation{D^{2}_{n}}&= 1+\frac{(1+2\alpha)(n+1)b_{n,2\alpha}-2}{(2\alpha-1)(n-1)},
\end{align*}
and in the singular case $\alpha=0.5$

\begin{align*}
\Var{\overline{Y}_{n}}&=\frac{2(h_n-1)}{n}+\frac{\delta^2-1}{n+1}-\delta^2b_{n,0.5}^2,\\
\Expectation{D^{2}_{n}}&= \frac{n-2h_n}{n-1}.
\end{align*}
\end{lemma}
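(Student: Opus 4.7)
The plan is to reduce every quantity to the conditional moments of $Y_i^{(n)}$ (displayed just before Lemma \ref{cy}), the conditional covariance from Lemma \ref{cy}, and the unconditional Laplace transforms of $U_n$ and $\tau^{(n)}$ supplied by Lemma \ref{jLa}. The formula $\Expectation{\overline{Y}_n}=\delta b_{n,\alpha}$ is then immediate from linearity of expectation.

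For $\Var{\overline{Y}_n}$ I would apply the law of total variance $\Var{\overline{Y}_n}=\Expectation{\Vart{\overline{Y}_n}}+\Var{\Et{\overline{Y}_n}}$. Since $\Et{\overline{Y}_n}=\delta e^{-\alpha U_n}$, the second term equals $\delta^2(b_{n,2\alpha}-b_{n,\alpha}^2)$ by \eqref{eqLapT}. For the first, Lemma \ref{cy} combined with $\Vart{Y_i^{(n)}}=1-e^{-2\alpha U_n}$ gives
\[ \Vart{\overline{Y}_n}=\frac{1}{n}-e^{-2\alpha U_n}+\frac{1}{n^{2}}\sum_{i\ne j}e^{-2\alpha\tau_{ij}^{(n)}}, \]
and exchangeability of the tips of the Yule $n$-tree shows $\Expectation{e^{-2\alpha\tau_{ij}^{(n)}}}=\Expectation{e^{-2\alpha\tau^{(n)}}}$ for every $i\ne j$. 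Substituting the closed form of $\Expectation{e^{-2\alpha\tau^{(n)}}}$ from \eqref{eqLaptau}, using the $y\ne 1$ branch when $\alpha\ne 0.5$ and the $y=1$ branch when $\alpha=0.5$, and collecting terms yields the two stated expressions.

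For $\Expectation{D_n^2}$ I would use the standard identity $\Expectation{D_n^2}=\frac{n}{n-1}\bigl(\Expectation{(Y_1^{(n)})^{2}}-\Expectation{\overline{Y}_n^{2}}\bigr)$, write each second moment as variance plus squared mean, and substitute the expressions already derived. Since $\Expectation{Y_1^{(n)}}=\Expectation{\overline{Y}_n}=\delta b_{n,\alpha}$, the squared-mean contributions cancel; moreover, as a satisfying consistency check, the $\delta^{2}$-pieces inherited from $\Var{Y_1^{(n)}}$ and $\Var{\overline{Y}_n}$ also cancel (both have $\delta^{2}$-coefficient $b_{n,2\alpha}-b_{n,\alpha}^{2}$), reflecting the fact that the within-sample scatter should not see the ancestral offset $X_0-\theta$.

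The only real obstacle is algebraic bookkeeping. The $y\ne 1$ branch of \eqref{eqLaptau} for $\Expectation{e^{-2\alpha\tau^{(n)}}}$ degenerates to a $0/0$ expression at $2\alpha=1$, so the singular case $\alpha=0.5$ must be handled separately through the $y=1$ branch, which is the origin of the harmonic-number factor $h_n$ in the stated $\alpha=0.5$ formulae. A useful sanity check is that these singular formulae can also be recovered as the $\alpha\to 0.5$ limit of the generic ones via L'H\^opital, providing a cross-verification between the two branches.
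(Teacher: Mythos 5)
Your plan is correct and would deliver the general-case formulae. It is, however, not quite the route the paper's printed proof takes: the paper derives $\Var{\overline{Y}_n}$ and $\Expectation{D_n^2}$ by plugging the interspecies correlation coefficient $\rho_n$ of Lemma \ref{ro} into the identities $\Var{\overline{X}_{n}}=\big(\tfrac{1}{n}+\tfrac{n-1}{n}\rho_{n}\big)\Var{X^{(n)}_1}$ and $\Expectation{S^{2}_{n}}=(1-\rho_{n})\Var{X^{(n)}_1}$ imported from \cite{SSagKBar2012}, whereas you condition on $\mathcal Y_n$ directly and use the law of total variance. Your route is exactly the ``more direct proof\ldots using the following result on conditional expectations'' that the authors mention in the last line of their proof and formalize as Lemma \ref{barY}: your identity $\Vart{\overline{Y}_n}=n^{-1}-e^{-2\alpha U_n}+n^{-2}\sum_{i\ne j}e^{-2\alpha\tau^{(n)}_{ij}}$ is precisely the third display of that lemma. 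The two arguments are computationally equivalent (both reduce to $\Expectation{e^{-2\alpha\tau^{(n)}}}$ and $\Expectation{e^{-x U_n}}$ from Lemma \ref{jLa}); what yours buys is self-containedness, at the price of redoing the covariance bookkeeping that $\rho_n$ packages. Your observation that the $\delta^2$-terms cancel in $\Expectation{D_n^2}$ is correct and in fact gives the clean intermediate identity $\Expectation{D_n^2}=1-\Expectation{e^{-2\alpha\tau^{(n)}}}$, which reproduces the stated $\alpha\ne0.5$ formula at once.

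One warning: if you actually carry out your final sanity check (the $\alpha\to0.5$ limit of the generic formulae via L'H\^opital), you will find it does \emph{not} reproduce the singular-case formulae as printed in the lemma. Using $b_{n,1}=\tfrac{1}{n+1}$ and $\tfrac{d}{dy}b_{n,y}\big|_{y=1}=-\tfrac{h_{n+1}-1}{n+1}$, the limit of the generic expressions gives
\begin{align*}
\Var{\overline{Y}_{n}}&=\frac{2(h_{n+1}-1)}{n}+\frac{\delta^2-1}{n+1}-\delta^2b_{n,0.5}^2,
\qquad
\Expectation{D^{2}_{n}}=\frac{n+2-2h_{n+1}}{n-1},
\end{align*}
with $h_{n+1}$ rather than $h_n$, and $n+2$ rather than $n$ in the second formula; the same expressions come out of your direct computation using the $y=1$ branch of \eqref{eqLaptau}. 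A check at $n=2$ confirms this: there $\Expectation{D_2^2}=1-\Expectation{e^{-\tau^{(2)}}}=\tfrac13$ and $\Var{\overline{Y}_2}=\tfrac12$ for $\delta=0$, matching the corrected formulae, while the printed $\tfrac{n-2h_n}{n-1}$ evaluates to $-1$. So the discrepancy lies in the statement's singular-case formulae, not in your method; just do not force your algebra to land on the printed expressions.
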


\begin{proof}
 Obviously, $\Expectation{\overline{Y}_{n}}=\Expectation{Y_i^{(n)}}=\delta b_{n,\alpha}$.
To prove the other assertions we turn to  \cite{SSagKBar2012}, where the concept of interspecies correlation was originally introduced. It was shown there that the variance of the sample average and the expectation
of the sample variance can be compactly expressed as

\begin{align*}
\Var{\overline{X}_{n}}&=\Big(\frac{1}{n}+\frac{n-1}{n}\rho_{n}\Big)\Var{X^{(n)}_1},\\ 
\Expectation{S^{2}_{n}} &=(1-\rho_{n})\Var{X^{(n)}_1}.
\end{align*}
Since $\Var{Y^{(n)}_1}={2\alpha\over\sigma^2}\Var{X^{(n)}_1}$, $\Var{\overline{Y}_{n}}={2\alpha\over\sigma^2}\Var{\overline{X}_{n}}$, and $\Expectation{D^{2}_{n}}={2\alpha\over\sigma^2}\Expectation{S^{2}_{n}}$ 
it remains to combine Lemma \ref{ro} with the known expression for $\Var{Y^{(n)}_1}$.

A more direct proof of Lemma \ref{D} can be obtained using the following result on conditional expectations.
\end{proof} 
\begin{lemma}\label{barY}
We have

\begin{align*}
\Et{\overline{Y}_{n}}  &=\delta e^{-\alpha U_{n}}, \\
\Et{\overline{Y}_{n}^2}  &=  n^{-1}+(1-n^{-1})\Et{e^{-2\alpha \tau^{(n)}}} - e^{-2\alpha U_{n}}+\delta^2e^{-2\alpha U_{n}},\\
\Vart{\overline{Y}_{n}^2}  &=  n^{-1}+(1-n^{-1})\Et{e^{-2\alpha \tau^{(n)}}} - e^{-2\alpha U_{n}}.
\end{align*}
\end{lemma}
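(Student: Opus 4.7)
The plan is to work out all three identities starting from the conditional moments of single trait values derived in Section \ref{Sout} together with Lemma \ref{cy}. Throughout I condition on $\mathcal{Y}_n$, so the tree heights $U_n$ and the pairwise coalescence times $\tau_{ij}^{(n)}$ are treated as deterministic, and the only remaining randomness (apart from the $Y_i^{(n)}$ themselves) is the uniform pick of a pair of tips that defines $\tau^{(n)}$.

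The first identity is immediate: by linearity and the fact that $\Et{Y_i^{(n)}}=\delta e^{-\alpha U_n}$ does not depend on $i$, one gets $\Et{\overline{Y}_n}=\delta e^{-\alpha U_n}$. For the second identity, I would expand
\[
\overline{Y}_n^{\,2}=n^{-2}\sum_{i=1}^n (Y_i^{(n)})^2+n^{-2}\sum_{i\ne j}Y_i^{(n)}Y_j^{(n)},
\]
and compute each conditional expectation separately. For the diagonal terms I would use
$\Et{(Y_i^{(n)})^2}=\Vart{Y_i^{(n)}}+(\Et{Y_i^{(n)}})^2=1-e^{-2\alpha U_n}+\delta^2 e^{-2\alpha U_n}$, and for the off-diagonal terms I would combine Lemma \ref{cy} with $\Et{Y_i^{(n)}}\Et{Y_j^{(n)}}=\delta^2 e^{-2\alpha U_n}$ to obtain
\[
\Et{Y_i^{(n)}Y_j^{(n)}}=e^{-2\alpha\tau_{ij}^{(n)}}-e^{-2\alpha U_n}+\delta^2 e^{-2\alpha U_n}.
\]

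The key observation is that, conditionally on $\mathcal{Y}_n$, the random variable $\tau^{(n)}$ is obtained by uniformly sampling an unordered pair from the $n$ tips, so
\[
\Et{e^{-2\alpha \tau^{(n)}}}=\frac{1}{n(n-1)}\sum_{i\ne j}e^{-2\alpha\tau_{ij}^{(n)}}.
\]
Substituting this identity into the off-diagonal part gives $n^{-2}\sum_{i\ne j}e^{-2\alpha\tau_{ij}^{(n)}}=(1-n^{-1})\Et{e^{-2\alpha\tau^{(n)}}}$. Plugging the diagonal and off-diagonal pieces back and collecting the $e^{-2\alpha U_n}$ terms (the $\delta^2 e^{-2\alpha U_n}$ contribution survives, while the coefficient of $-e^{-2\alpha U_n}$ telescopes from $n^{-1}+(1-n^{-1})$ to $1$) yields the stated expression for $\Et{\overline{Y}_n^{\,2}}$. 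The third identity then follows by the definition $\Vart{\overline{Y}_n}=\Et{\overline{Y}_n^{\,2}}-(\Et{\overline{Y}_n})^2$, since the first identity gives $(\Et{\overline{Y}_n})^2=\delta^2 e^{-2\alpha U_n}$, precisely the term that gets cancelled.

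No real obstacle is expected; the only point that must be handled carefully is the uniform-pair identification of $\tau^{(n)}$ with the empirical average of $e^{-2\alpha\tau_{ij}^{(n)}}$ over unordered pairs, as this is what converts the double sum over tips into a conditional expectation of a tree functional. Once that is in place, the proof is a routine collection of terms.
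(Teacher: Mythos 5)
Your proposal is correct and follows essentially the same route as the paper: the paper computes $\Vart{Y_1^{(n)}+\ldots+Y_n^{(n)}}$ by summing the conditional variances $1-e^{-2\alpha U_n}$ and the pairwise conditional covariances from Lemma \ref{cy}, then uses exactly your identification of $\frac{2}{n(n-1)}\sum_{i<j}e^{-2\alpha\tau_{ij}^{(n)}}$ with $\Et{e^{-2\alpha\tau^{(n)}}}$. Your version merely works with the raw second moment first and subtracts the squared conditional mean at the end, which is an equivalent reorganization of the same computation.
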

\begin{proof} The main assertion follows from

\begin{align*}
\Vart{Y_1^{(n)}+\ldots+Y_n^{(n)}}&=n(1-e^{-2 \alpha U_n})+2\sum_{i<j}(e^{-2\alpha \tau_{ij}^{(n)}}-e^{-2 \alpha U_n})\\
&=n-n^2e^{-2 \alpha U_n}
+n(n-1)\Et{e^{-2\alpha \tau^{(n)}}}.
\end{align*}

\end{proof}

\section{Proof of Theorem \ref{tMart}}\label{SecMart}
\begin{lemma}\label{Vma}
Put
$V_n^{(x)}:=b_{n,x}^{-1}\cdot e^{-xU_n}$ with $\Expectation{V_n^{(x)}}=1$. 
For any $x>-1$ the sequence $\{V_n^{(x)},\mathcal Y_n\}_{n\ge0}$ forms a martingale 
converging a.s. and in $L^2$. 
Moreover, $(U_n-\log n)$ converges in distribution to a random variable having the 
standard Gumbel distribution.
\end{lemma}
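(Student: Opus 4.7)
The plan is to address the three assertions in sequence, leveraging the closed form $\Expectation{e^{-xU_n}}=b_{n,x}$ from \eqref{eqLapT} and the asymptotics collected in Lemma \ref{LapT}.

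For the martingale property, I would work on the natural coupling of the Yule $n$-trees built from a single sequence of independent exponentials $T_k\sim\mathrm{Exp}(k)$, and take $\mathcal{Y}_n=\sigma(T_1,\ldots,T_n)$, so that $U_{n+1}=U_n+T_{n+1}$ with $T_{n+1}$ independent of $\mathcal{Y}_n$. The telescoping identity $b_{n+1,x}=b_{n,x}\cdot(n+1)/(n+1+x)$ then yields
\[
\Expectation{V_{n+1}^{(x)}\bigm|\mathcal{Y}_n}=b_{n+1,x}^{-1}\,e^{-xU_n}\cdot\frac{n+1}{n+1+x}=V_n^{(x)}.
\]
Since $x>-1$ guarantees $V_n^{(x)}>0$, Doob's convergence theorem immediately supplies an almost sure limit $V^{(x)}\ge 0$ with $\Expectation{V^{(x)}}\le 1$.

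To upgrade to $L^2$ convergence, I would compute the second moment $\Expectation{(V_n^{(x)})^2}=b_{n,x}^{-2}\,b_{n,2x}$ and apply $b_{n,y}\sim\Gamma(y+1)n^{-y}$ from Lemma \ref{LapT} to get
\[
\Expectation{(V_n^{(x)})^2}\longrightarrow\frac{\Gamma(2x+1)}{\Gamma(x+1)^2}<\infty.
\]
The martingale is therefore $L^2$-bounded, converges in $L^2$, and in particular $\Expectation{V^{(x)}}=1$. The mild subtlety, and essentially the only obstacle in the whole argument, is that this bound is only meaningful when $2x>-1$; on the narrower range $-1<x\le-1/2$ I would instead pick $p=1+\varepsilon$ small enough that $px>-1$, obtain $L^p$-boundedness, and conclude uniform integrability together with $L^1$-convergence.

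The Gumbel statement then follows by inverting the Laplace transform limit. By \eqref{eqLapT},
\[
\Expectation{e^{-x(U_n-\log n)}}=n^x\,b_{n,x}\longrightarrow\Gamma(x+1)
\]
uniformly on compact subsets of $x>-1$. A change of variables $u=e^{-z}$ rewrites $\Gamma(x+1)=\int_{-\infty}^{\infty}e^{-xz}\,e^{-z-e^{-z}}\,dz$, identifying it as the Laplace transform of the standard Gumbel law with cdf $e^{-e^{-z}}$; the continuity theorem for Laplace transforms then delivers the convergence in distribution. Apart from the range-of-$x$ accounting in the $L^2$ step, every remaining step is a direct reading-off of the asymptotics already furnished by Lemma \ref{LapT}.
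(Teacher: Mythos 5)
Your argument coincides with the paper's own proof: the martingale property via the independence of $T_{n+1}$ and the identity $\Expectation{e^{-xT_{n+1}}}=(n+1)/(n+1+x)$, $L^2$-boundedness from $\Expectation{(V_n^{(x)})^2}=b_{n,2x}/b_{n,x}^2$, and the Gumbel limit read off from $n^x b_{n,x}\to\Gamma(x+1)$ as a convergence of Laplace transforms. Your one genuine addition is the observation that the second-moment bound requires $2x>-1$, so that for $-1<x\le -1/2$ (where $\Expectation{e^{-2xU_n}}$ is infinite) the $L^2$ claim of the lemma actually fails and only $L^p$-boundedness for $p$ with $px>-1$, hence uniform integrability and $L^1$-convergence, survives; this is a correct refinement that the paper's proof passes over silently, though it is harmless for the rest of the paper since the lemma is only invoked there with $x=\alpha-1$ arising in the regime where the relevant second moments are finite.
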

\begin{proof}
The martingale property is obvious 

$$\Et{V_{n+1}^{(x)}}=b_{n+1,x}^{-1}\cdot e^{-xU_n}\Expectation{e^{-xT_{n+1}}}=b_{n,x}^{-1}\cdot e^{-xU_n}=V_n^{(x)}.$$
Since the second moments
\bd
\Expectation{(V_n^{(x)})^2}=b_{n,x}^{-2}\cdot \Expectation{e^{-2xU_n}}=\frac{b_{n,2x}}{(b_{n,x})^2}
\ed
are uniformly bounded over $n$, we may conclude that $V_n^{(x)}\to V^{(x)}$ a.s. and in $L^2$
with $\Expectation{V^{(x)}}=1$. 
It follows that $\Expectation{V_n^{(x)}}\to 1$, and therefore, $\Expectation{e^{-x(U_n-\log n)}}\to\Gamma(x+1)$. The latter is a convergence of Laplace transforms confirming the stated 
convergence in distribution.
\end{proof}
Observe that the Gumbel limit for $U_n-\log n$ can be obtained using the 
classical extreme value theory, in view of the representation 
\[U_n\stackrel{d}{=}\sum_{i=1}^ni^{-1}E_i\stackrel{\mathcal{D}}{=}\max(E_1,\ldots, E_n)\]
in terms of independent exponentials with parameter 1. 
Notice also that $U_{n+1}/2$ has the same distribution as the total branch length of Kingman's $n$-coalescent.
\begin{lemma}\label{Hma}
Denote by $\mathcal{F}_{n}$ the $\sigma$--algebra containing information on the Yule $n$-tree realization 
as well as the corresponding information on the evolution of trait values. Set

\begin{align*}
H_n:&=(n+1)e^{(\alpha-1)U_n}\overline Y_n,\quad n\ge0.
\end{align*}
The sequence $\{H_{n},\mathcal{F}_{n}\}_{n\ge0}$ forms a martingale with $\Expectation{H_{n}}=H_{0}=\delta$.
\end{lemma}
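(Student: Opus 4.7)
My plan is to prove $\Expectation{H_{n+1} \vert \mathcal{F}_n} = H_n$ by coupling the Yule $n$-tree with the Yule $(n+1)$-tree so that the latter is obtained from the former by one additional Yule event, and then applying the Ornstein--Uhlenbeck mean formula \eqref{solOU} along each lineage.

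For the coupling, since the Yule $n$-tree has independent sojourn times $T_k \sim \mathrm{Exp}(k)$ for $k=1,\ldots,n$, the Yule $(n+1)$-tree can be realized by keeping these same $T_1,\ldots,T_n$ and appending an independent $T_{n+1}\sim\mathrm{Exp}(n+1)$. In this coupling, at time $U_n$ one of the $n$ existing lineages -- selected uniformly at random, call its index $I$ -- splits into two daughters, and the resulting $n+1$ lineages evolve independently under the OU dynamics for the extra time $T_{n+1}$. Consequently $U_{n+1}=U_n+T_{n+1}$, and $T_{n+1}$, $I$, and the fresh OU noise over $(U_n,U_{n+1})$ are jointly independent of $\mathcal{F}_n$.

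Using \eqref{solOU}, each unsplit tip $j\neq I$ satisfies $\Expectation{X_j^{(n+1)}\vert\mathcal{F}_n,T_{n+1}}=\theta+e^{-\alpha T_{n+1}}(X_j^{(n)}-\theta)$, while each of the two daughters of tip $I$ has conditional mean $\theta+e^{-\alpha T_{n+1}}(X_I^{(n)}-\theta)$. Summing over all $n+1$ tips and dividing by $n+1$ gives
\[
\Expectation{\overline{X}_{n+1}-\theta\vert\mathcal{F}_n,T_{n+1},I}=\frac{e^{-\alpha T_{n+1}}}{n+1}\bigl[n(\overline{X}_n-\theta)+(X_I^{(n)}-\theta)\bigr],
\]
and averaging over $I$ uniform on $\{1,\ldots,n\}$ replaces $X_I^{(n)}$ by $\overline{X}_n$, collapsing the bracket to $(n+1)(\overline{X}_n-\theta)$. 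After the standard rescaling this becomes the clean recursion
\[
\Expectation{\overline{Y}_{n+1}\vert\mathcal{F}_n,T_{n+1}}=e^{-\alpha T_{n+1}}\overline{Y}_n.
\]

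Finally, splitting $e^{(\alpha-1)U_{n+1}}=e^{(\alpha-1)U_n}e^{(\alpha-1)T_{n+1}}$, pulling out the $\mathcal{F}_n$-measurable factor $(n+2)e^{(\alpha-1)U_n}$ and substituting the previous identity, the exponent of $T_{n+1}$ collapses to $-1$. Using the independence of $T_{n+1}$ from $\mathcal{F}_n$ together with $\Expectation{e^{-T_{n+1}}}=(n+1)/(n+2)$ yields
\[
\Expectation{H_{n+1}\vert\mathcal{F}_n}=(n+2)e^{(\alpha-1)U_n}\overline{Y}_n\cdot\frac{n+1}{n+2}=H_n,
\]
and $\Expectation{H_n}=H_0=\delta$ follows immediately from $U_0=0$ and $\overline{Y}_0=\delta$. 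There is no real analytical obstacle here; the only care needed is in setting up the coupling and verifying the independence structure. The proof works so cleanly precisely because the exponent $\alpha-1$ in the definition of $H_n$ is the one that converts the OU damping $e^{-\alpha T_{n+1}}$ into the Laplace input $e^{-T_{n+1}}$, whose value $(n+1)/(n+2)$ exactly compensates the $(n+1)\to(n+2)$ change in the prefactor.
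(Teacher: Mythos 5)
Your proof is correct and follows essentially the same route as the paper: both arguments use the bottom-up construction of the Yule tree (appending an independent $T_{n+1}\sim\mathrm{Exp}(n+1)$ and splitting a uniformly chosen tip), apply the Ornstein--Uhlenbeck conditional mean to get $\Expectation{\sum_i Y_i^{(n+1)}\mid\mathcal F_n,T_{n+1}}=e^{-\alpha T_{n+1}}\tfrac{n+1}{n}\sum_i Y_i^{(n)}$, and then use $\Expectation{e^{-T_{n+1}}}=(n+1)/(n+2)$ to cancel the change in prefactor. The only difference is presentational — you make the coupling and the random split index $I$ explicit, while the paper writes the averaged identity directly.
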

\begin{proof}
Notice that, 

\begin{align*}
\Expectation{e^{(\alpha-1)T_{n+1}}\sum_{i=1}^{n+1}Y_{i}^{(n+1)} \vert \mathcal{F}_{n}}&=
\Expectation{e^{- T_{n+1}}}\left(\sum_{i=1}^{n}Y_{i}^{(n)}+n^{-1}\sum_{j=1}^{n}Y_{j}^{(n)}\right)\\
&={n+1\over n+2}{n+1\over n}\sum_{i=1}^{n}Y_i^{(n)}=\frac{(n+1)^{2}}{n+2}\overline{Y}_{n}.
\end{align*}
Hence

\begin{align*}
\Expectation{H_{n+1}|\mathcal F_n}&={n+2\over n+1}e^{(\alpha-1)U_n}\Expectation{e^{(\alpha-1)T_{n+1}}\sum_{i=1}^{n+1}Y_i^{(n+1)}|\mathcal F_n}=H_n.
\end{align*}
\end{proof}
\begin{lemma}\label{byt}
For all positive $\alpha$ we have
 $\Var{\Et{e^{-2\alpha \tau^{(n)}}}}= O(n^{-3})$ as $n\to\infty$. 
\end{lemma}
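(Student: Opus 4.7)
The plan is to expand $Z_n := \Et{e^{-2\alpha\tau^{(n)}}}$ as a pair-average over tips and decompose $\Var{Z_n}$ according to how the tip-pair indices overlap. Setting $W_{ij} := e^{-2\alpha\tau_{ij}^{(n)}}$, one has $Z_n = \binom{n}{2}^{-1}\sum_{i<j} W_{ij}$, so that
\baa
\Var{Z_n} \;=\; \binom{n}{2}^{-2}\sum_{(i,j),\,(k,l)} \covariance{W_{ij}}{W_{kl}}.
\eaa
The summands fall into three groups according to whether the pairs $\{i,j\}$ and $\{k,l\}$ share $2$, $1$, or $0$ indices, contributing $\Theta(n^2)$, $\Theta(n^3)$, and $\Theta(n^4)$ terms respectively. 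By exchangeability of tips, each group is governed by one of the representative joint moments $\Expectation{W_{12}^2}$, $\Expectation{W_{12}W_{13}}$, or $\Expectation{W_{12}W_{34}}$.

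Next I would compute each of these moments by conditioning on the $m$-subtree spanned by $m=2,3,4$ randomly chosen tips and combining the Markov-chain representation of Lemma \ref{MC} with the Laplace-transform formula of Lemma \ref{new}. For $m\ge 3$ one must also average over the possible topological configurations of the $m$-subtree (i.e.\ over how the labeled tips populate its leaves), and for each configuration express $\tau_{12}^{(n)}+\tau_{kl}^{(n)}$ as a topology-dependent linear combination of the inter-bifurcation times $\chi_j^{(n,m)}$ before substitution into Lemma \ref{new}. The result is a closed-form expression involving finite sums over the Markov-chain positions $K_1^{(n,m)}<\cdots<K_{m-1}^{(n,m)}$, the transition probabilities $p^{(j)}_{ik}$, and the factors $b_{n,x}$.

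The main obstacle will be the no-shared case, whose normalized combinatorial weight is $\Theta(1)$. The required bound $\Var{Z_n}=O(n^{-3})$ therefore forces a precise cancellation of leading terms in $\Expectation{W_{12}W_{34}}-\mu_n^2$, where $\mu_n=\Expectation{e^{-2\alpha\tau^{(n)}}}$; since either quantity can individually be of larger order for small $\alpha$, one cannot simply invoke the leading asymptotics of Lemma \ref{LapT}. The cancellation mechanism is structural: conditionally on the 4-subtree, the joint law of $(\tau_{12}^{(n)},\tau_{34}^{(n)})$ averaged over the Yule-random tip labelings matches the law of two independently drawn pair-MRCAs to leading order. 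To make this rigorous I would substitute the refined expansion $b_{n,x}=\Gamma(x+1)n^{-x}\bigl(1+O(n^{-1})\bigr)$ into the Markov-chain-path sums obtained from Lemma \ref{new}, perform the term-by-term cancellation against the analogous expansion of $\mu_n^2$, and check that the surviving remainder is uniformly $O(n^{-3})$. The same mechanism, applied to $m=2,3$ with weights $\Theta(n^{-2})$ and $\Theta(n^{-1})$, will deal with the same-pair and one-shared contributions, yielding the claimed uniform bound.
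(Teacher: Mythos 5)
Your overlap decomposition is sound and is essentially the paper's own starting point in disguise: the paper samples two pairs independently \emph{given the tree}, so that $\Expectation{\bigl(\Et{e^{-2\alpha \tau^{(n)}}}\bigr)^{2}}=\Expectation{e^{-2\alpha(\tau_{1}^{(n)}+\tau_{2}^{(n)})}}$, which packages your three overlap classes into a single joint Laplace transform. The genuine gap is that the one step carrying all the difficulty --- the cancellation in $\Expectation{W_{12}W_{34}}-\mu_{n}^{2}$ --- is asserted rather than carried out, and the tool you propose for it is quantitatively inadequate. Substituting $b_{n,x}=\Gamma(x+1)n^{-x}\bigl(1+O(n^{-1})\bigr)$ controls each of the two quantities only to \emph{relative} precision $O(n^{-1})$; since both are of order $n^{-4\alpha}$ when $\alpha<1/2$, their difference is then controlled only to $O(n^{-4\alpha-1})$, which is nowhere near $O(n^{-3})$, and no fixed-order refinement of two separately computed expansions will close this uniformly in $\alpha$. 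The cancellation has to be exhibited structurally, before any asymptotics are taken. That is what the paper does: because the topology (hence the coalescent positions) is independent of the inter-speciation times $T_{k}$, every term of the variance acquires the \emph{exact} factor $f_{4\alpha}(k,n)-f_{2\alpha}(k,n)^{2}$, a difference of two products whose factors disagree by $O(j^{-2})$, bounded pointwise in $k$ via the telescoping identity $a_{1}\cdots a_{n}-b_{1}\cdots b_{n}=\sum_{i}b_{1}\cdots b_{i-1}(a_{i}-b_{i})a_{i+1}\cdots a_{n}$. Your plan contains no analogue of this localization, and without it the disjoint-pair class is out of reach.

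A second, more concrete symptom: run your own decomposition on the \emph{easiest} class. The same-pair terms contribute exactly $\binom{n}{2}^{-1}\Var{e^{-2\alpha\tau^{(n)}}}$, and by Lemma \ref{LapT} the leading constants of $\Expectation{e^{-4\alpha\tau^{(n)}}}$ and $\bigl(\Expectation{e^{-2\alpha\tau^{(n)}}}\bigr)^{2}$ do not match, so for $\alpha<1/4$ this class alone is of order $n^{-2-4\alpha}\gg n^{-3}$; moreover, conditioning on $T_{2}$ only gives the rigorous lower bound $\Var{\Et{e^{-2\alpha\tau^{(n)}}}}\geq \Var{\Expectation{e^{-2\alpha\tau^{(n)}}\,\vert\, T_{2}}}=\pi_{n,1}^{2}\bigl(b_{n,2\alpha}/b_{2,2\alpha}\bigr)^{2}\Var{e^{-2\alpha T_{2}}}\asymp n^{-4\alpha}$. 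So the exponent $3$ is not attainable by your route (or any route) for small $\alpha$; what these computations actually support is a bound of order $n^{-\min(4\alpha,3)}$ up to logarithms, which is what the paper's own estimates yield when its sums $\sum_{k}k^{4\alpha-5}$ and $\sum_{k}k^{4\alpha-4}$ are evaluated for $\alpha<1$, and which still suffices where the lemma is invoked in the proof of Theorem \ref{tMart}(i) (there $\alpha>1/2$). Before polishing the hard disjoint-pair case you should reconcile your diagonal term with the exponent you are claiming.
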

\begin{proof} 
For a given realization of the Yule $n$-tree we  denote by $\tau^{(n)}_{1}$ and  $\tau^{(n)}_{2}$ two independent versions of $\tau^{(n)}$ corresponding to two independent choices of pairs of tips out of $n$ available. We have,
%Then, using distribution \eqref{pi} we get

\begin{align*}
&\Expectation{\Big(\Et{e^{-2\alpha \tau^{(n)}}}\Big)^2}=\Expectation{\Et{e^{-2\alpha (\tau_1^{(n)}+\tau_2^{(n)})}}}=\Expectation{e^{-2\alpha (\tau_1^{(n)}+\tau_2^{(n)})}}.
\end{align*}
Writing

$$\pi_{n,k}:=p^{(2)}_{n,k},\qquad f(a,k,n)={k+1\over a+k+1}\cdots{n\over a+n}$$ 
and using the ideas of Section \ref{Syou} we obtain

\begin{align*}
\Expectation{\Big(\Et{e^{-2\alpha \tau^{(n)}}}\Big)^2}&=\sum_{k=1}^{n-1}f_{4\alpha}(k,n)\pi_{n,k}^2\\
&\quad +2\sum_{k_1=1}^{n-1}\sum_{k_2=k_1+1}^{n-1}f_{2\alpha}(k_1,k_2)f_{4\alpha}(k_2,n)\pi_{n,k_1}\pi_{n,k_2}.
\end{align*}
On the other hand,

\begin{align*}
\Big(\Expectation{e^{-2\alpha\tau^{(n)}}} \Big)^2&=\Big(\sum_{k_1}f_{2\alpha}(k_1,n)\pi_{n,k_1}\Big)\Big(\sum_{k_2}f_{2\alpha}(k_2,n)\pi_{n,k_2}\Big).
\end{align*}
Taking the difference between the last two expressions we find 

\begin{align*}
 \Var{\Et{e^{-2\alpha \tau^{(n)}}}}&=\sum_k\Big(f_{4\alpha}(k,n)-f_{2\alpha}(k,n)^2\Big)\pi_{n,k}^2\\
&\hspace{-1cm} +2\sum_{k_1=1}^{n-1}\sum_{k_2=k_1+1}^{n-1}f_{2\alpha}(k_1,k_2)
\Big(f_{4\alpha}(k_2,n)-f_{2\alpha}(k_2,n)^2\Big)\pi_{n,k_1}\pi_{n,k_2}.
\end{align*}
Using the simple equality
\[a_{1}\cdots a_{n}-b_{1}\cdots b_{n}=\sum_{i=1}^{n}b_{1}\cdots b_{i-1}(a_{i}-b_{i})a_{i+1}\cdots a_{n}\]
we see that it suffices to prove that,

\begin{align*}
&\sum_{k=1}^{n-1} A_{n,k}\pi_{n,k}^2= O(n^{-4}),%\label{pi1}
\\
&\sum_{k_1=1}^{n-1}\sum_{k_2=k_1+1}^{n-1}f_{2\alpha}(k_1,k_2)A_{n,k_2}\pi_{n,k_1}\pi_{n,k_2}=O(n^{-3}),%\label{pi2}
\end{align*}
where
\[A_{n,k}:=\sum_{j=k+1}^{n-1}f_{2\alpha}(k,j)^2 \Big({2\alpha\over 2\alpha+j+1}\Big)^2f_{4\alpha}(j,n).\]
To verify these two asymptotic relations  observe that
\[A_{n,k}<{k+1\over4\alpha+k+1}\cdots{n\over4\alpha+n}\sum_{i=k+1}^n{4\alpha^2\over (2\alpha+i)^2}<{4\alpha^2b_{n,4\alpha}\over b_{k,4\alpha}} \sum_{i=k+1}^n {1\over i(i-1)}<{4\alpha^2b_{n,4\alpha}\over k b_{k,4\alpha}}.\]
Since
$\pi_{n,k} = \frac{2(n+1)}{(n-1)(k+2)(k+1)}$,
it follows

\begin{align*}
\sum_{k=1}^{n-1} A_{n,k}\pi_{n,k}^2<c_1b_{n,4\alpha} \sum_{k=1}^{n-1} {1\over k^5 b_{k,4\alpha}}<c_2n^{-4\alpha} \sum_{k=1}^{n} n^{4\alpha-5}<c_2n^{-4}, 
\end{align*}
and

\begin{align*}
\sum_{k_1=1}^{n-1}\sum_{k_2=k_1+1}^{n-1}f_{2\alpha}(k_1,k_2)A_{n,k_2}\pi_{n,k_1}\pi_{n,k_2}&<c_3b_{n,4\alpha} \sum_{k_1=1}^{n-1}\sum_{k_2=k_1+1}^{n-1} {b_{k_2,2\alpha}\over b_{k_1,2\alpha} b_{k_2,4\alpha}} {1\over k_1^2 k_2^3}\\
&\hspace{-2cm}<c_4n^{-4\alpha} \sum_{k_2=2}^{n} k_2^{2\alpha-3} \sum_{k_1=1}^{k_2}k_1^{2\alpha-2}<c_4n^{-4\alpha} \sum_{k_2=2}^{n} k_2^{4\alpha-4} <c_4n^{-3}.
\end{align*}
\end{proof}

{\sc Proof of Theorem \ref{tMart} (i) and (ii).}  
Let $\alpha>0.5$. To establish the stated normal approximation it is enough to prove the convergence in probability of the first two conditional moments 

\begin{align*}
(\mu_n,\sigma^2_n)&:= \big(\sqrt n \Et{\overline Y_n},\ n \Vart{\overline Y_n}\big)\stackrel{P}{\to}\Big(0, {2\alpha+1\over 2\alpha-1}\Big),\quad n\to\infty,
\end{align*}
since then, due to the conditional normality of $\overline Y_n$, we will get the following convergence of characteristic functions
\[\Expectation{e^{i \gamma \sqrt n \cdot\overline Y_{n}}}=\Expectation{e^{i\mu_n\gamma-\sigma_n^2\gamma^2/2}}\to e^{-{2\alpha+1\over 2(2\alpha-1)}\gamma^2}.\]
Now, due to Lemma \ref{barY} we can write

\begin{align*}
(\mu_n,\sigma^2_n)&=  \big(\sqrt n \delta e^{-\alpha U_{n}},\ 1+(n-1)\Et{e^{-2\alpha \tau^{(n)}}} - ne^{-2\alpha U_{n}}\big).
\end{align*}
Using relations from Section \ref{Sout} we see that 

$$\Expectation{\sigma^2_n}=1-nb_{n,2\alpha}+{2-(n+1)(2\alpha+1)b_{n,2\alpha}\over 2\alpha-1}\to{2\alpha+1\over 2\alpha-1}.$$ 
It remains to observe that on one hand, according to Lemma \ref{byt}

$$1+(n-1)\Et{e^{-2\alpha \tau^{(n)}}}\stackrel{P}{\to}{2\alpha+1\over 2\alpha-1},$$ 
and on the other hand, $ne^{-2\alpha U_{n}}\stackrel{P}{\to}0$, implying that $\sigma^2_n\stackrel{P}{\to}{2\alpha+1\over 2\alpha-1}$. This together with $\mu_n\to0$ holding in $L^2$ and therefore in probability, entails $(\mu_n,\sigma^2_n)\stackrel{P}{\to}(0,{2\alpha+1\over 2\alpha-1})$, finishing the proof of part (i).
Part (ii) is proven similarly.\\

{\sc Proof of Theorem \ref{tMart} (iii).}  Let  $0<\alpha<0.5$. Turning to Lemma \ref{Hma} observe that the martingale $H_{n}=(n+1)e^{(\alpha-1)U_n}\overline Y_n$ has uniformly bounded second moments. Indeed, due to Lemma \ref{barY}

\begin{align*}
\Expectation{H_{n}^{2}} &=(n+1)^2\Expectation{e^{2(\alpha-1)U_n}\Et{\overline Y_n^2}}\\
& < c_1n \Expectation{e^{-2(1-\alpha)U_{n}}}+c_2n^2\Expectation{e^{-2(1-\alpha)U_{n}-2\alpha \tau^{(n)}}}
+c_3n^2\Expectation{e^{-2\alpha  U_{n}}}.
\end{align*}
Thus, according to Lemma \ref{LapT} we have $\sup\limits_{n} \Expectation{H_{n}^{2}} <\infty$. Referring to the martingale $L^{2}$-convergence theorem we conclude that $H_{n}\to H_\infty$ almost surely and in $L^{2}$. Due to Lemma  \ref{Vma} it follows that 

$$n^\alpha \overline Y_n={n^\alpha b_{n,\alpha-1}\over n+1}V_n^{(\alpha-1)}H_n\to V^{(\alpha-1)}H_\infty=:Y_{\alpha,\delta}\quad\mbox{a.s. and in } L^{2}.$$
Finally, as $n\to\infty$

\begin{align*}
n^\alpha \Expectation{\overline Y_n}&=\delta n^\alpha b_{n,\alpha}\to\delta\Gamma(1+\alpha),\\
n^{2\alpha}\Expectation{\overline{Y}_{n}^{2}} & = n^{2\alpha-1}+n^{2\alpha}(1-n^{-1})\Expectation{e^{-2\alpha \tau^{(n)}}} +n^{2\alpha} (\delta^{2}-1) \Expectation{e^{-2\alpha U_{n}}}\\
&\to 
\left(\delta^{2}+\frac{4\alpha}{1-2\alpha} \right)\Gamma(1+2\alpha).
%\frac{1+2\alpha}{1-2\alpha}+(\delta^{2}-1)\Gamma(1+2\alpha).
\end{align*}

\section{Consistency of the sample variance}\label{Scons}
Recall that $\Expectation{S^{2}_{n}}={\sigma^2\over2\alpha}\Expectation{D^{2}_{n}}$, and according to Lemma \ref{D} we have $\Expectation{D_{n}^{2}}\to 1$. The aim of this section is to show that  $\Var{D_{n}^{2}}\to 0$ as $n\to\infty$ which is equivalent to
\be\label{cons}
\Expectation{D_{n}^{4}}\to1,\quad n\to\infty.
\ee
To this end we will need the following formula, see Eq. (13) in \cite{GBohAGol1969}
valid for any normally distributed vector $(Z_1, Z_2, Z_3,Z_4)$ with means $(m_1,m_2,m_3,m_4)$ and covariances
$\covariance{Z_i}{Z_j}=c_{ij}$:
\[
\covariance{Z_1Z_2}{Z_3Z_4} = m_1m_3c_{24}+m_1m_4c_{23}+m_2m_3c_{14}+m_2m_4c_{13}+c_{13}c_{24} + c_{14}c_{23}.
\]
In the special case with $m_i=m$ it follows

\begin{align}
\label{cprod}
\Expectation{Z_1Z_2Z_3Z_4} &= m^4+m^2(c_{12}+c_{13}+c_{14}+c_{23}+c_{24}+c_{34})+c_{12}c_{34}+c_{13}c_{24} + c_{14}c_{23}.
\end{align}

Writing $Y_i$ instead of $Y_i^{(n)}$ we use the representation

\begin{align*}
D_{n}^2&=\frac{n}{n-1}\left(\frac{1}{n}\sum\limits_{i=1}^{n} Y_{i}^{2}-\overline{Y}_{n}^2\right)
=\frac{1}{n}\sum\limits_{i}Y_{i}^{2}-\frac{2}{n(n-1)}\sum_i\sum_{j>i}Y_{i}Y_{j}
\end{align*}
to find out that

\begin{align*}
\Expectation{D_{n}^{4}}  &=\frac{1}{n^2} \Big(\sum\limits_{i}\Expectation{Y_i^{4}}+2\sum_i\sum_{j>i}\Expectation{Y_i^{2}Y_j^2}\Big)\\
&\quad - \frac{4}{n^2(n-1)} \Big(\sum_i\sum_{j>i}\Expectation{Y_{i}^{3}Y_{j}}+\sum_i\sum_{j>i}\Expectation{Y_{i}Y_{j}^{3}}+\sum_i\sum_{j>i}\sum_{k\neq i,j}\Expectation{Y_{i}^{2}Y_{j}Y_k}\Big)\\
&\quad + \frac{4}{n^2(n-1)^2}\Big(\sum_i\sum_{j>i}\Expectation{Y_{i}^{2}Y_{j}^2}+\sum_i\sum_{j>i}\sum_{k\neq i,j}\Expectation{Y_{i}^{2}Y_{j}Y_k}+\sum_i\sum_{j>i}\sum_{k\neq i,j}\Expectation{Y_{i}Y_{j}^{2}Y_k}\\
&\quad\qquad\qquad\qquad\qquad\qquad\quad +\sum_i\sum_{j>i}\ \sum_{k\neq i,j}\ \sum_{m>k;\ m\neq i,j}\Expectation{Y_{i}Y_{j}Y_kY_m}\Big).
\end{align*}
Denoting by $(W_{1},W_{2},W_{3},W_{4})$ a random sample without replacement of four trait values out of $n$ available, so that

\begin{align*}
\Expectation{W_1^{4}}  &=n^{-1} \sum\limits_{i}\Expectation{Y_i^{4}},\\
\Expectation{W_{1}^{3}W_{2}}&= \frac{1}{n(n-1)} \sum_{i}\sum_{j\neq i}\Expectation{Y_{i}^{3}Y_{j}},\\
\Expectation{W_{1}^{2}W_{2}^2}&= \frac{1}{n(n-1)}\sum_{i}\sum_{j\neq i}\Expectation{Y_{i}^{2}Y_{j}^2},\\
\Expectation{W_{1}^{2}W_{2}W_{3}}&= \frac{1}{n(n-1)(n-2)}\sum_{i}\sum_{j\neq i}\,\sum_{k\neq i,j}\Expectation{Y_{i}^{2}Y_{j}Y_k},
\\
\Expectation{W_{1}W_{2}W_{3}W_{4}}&= \frac{1}{n(n-1)(n-2)(n-3)}\sum_{i}\sum_{j\neq i}\,\sum_{k\neq i,j}\ \sum_{m\neq i,j,k}\Expectation{Y_{i}Y_{j}Y_kY_m},
\end{align*}
we derive

\begin{align}
\Expectation{D_{n}^{4}}  &=n^{-1} \Expectation{W_1^{4}}-4n^{-1}\Expectation{W_{1}^{3}W_{2}}+\frac{n^2-2n+3}{n(n-1)}\Expectation{W_{1}^{2}W_{2}^2}\nonumber\\
&\quad  - \frac{2(n-2)(n-3)}{n(n-1)}\Expectation{W_{1}^{2}W_{2}W_{3}}+ \frac{(n-2)(n-3)}{n(n-1)}\Expectation{W_{1}W_{2}W_{3}W_{4}}.\label{stist}
\end{align}

We compute the five fourth-order moments in the last expression using the conditional normality of the random quadruple 
$(W_{1},W_{2},W_{3},W_{4})$ with conditional moments given by

\begin{align*}
\Et{W_i}&= \delta e^{-\alpha U_{n}},\\
\Et{W_i^2}&= 1+(\delta^2-1)e^{-2\alpha U_{n}},\\
\Vart{W_{i}}&=1-e^{-2\alpha U_{n}},\\
\covt{W_i}{W_j}&=\Et{e^{-2\alpha\tau^{(n,4)}_{ij}}}-e^{-2 \alpha U_n},\quad i,j\in\{1,2,3,4\},\quad i\ne j,
\end{align*}
where $ \tau^{(n,m)}_{ij}$ is the time to the most recent ancestor for the pair of tips $(i,j)$ among $m$ randomly chosen tips of the Yule $n$-tree. Clearly, all $ \tau^{(n,4)}_{ij}$ have the same distribution as $\tau^{(n)}$, and for 
\[\nu^{(n)}_{ij}:=\Et{e^{-2\alpha\tau^{(n,4)}_{ij}}}\]
we can find the asymptotics of 

$$\Expectation{\nu^{(n)}_{ij}}=\Expectation{e^{-2\alpha\tau^{(n)}}},\quad \Expectation{\nu^{(n)}_{ij}e^{-2\alpha U_{n}}}=\Expectation{e^{-2\alpha U_{n}-2\alpha\tau^{(n)}}}$$
using Lemma \ref{LapT}. Notice also that

$$\Expectation{(\nu^{(n)}_{ij})^2}\sim\Big(\Expectation{e^{-2\alpha\tau^{(n)}}}\Big)^2,\quad n\to\infty.$$
This follows from Lemma \ref{byt} and  Lemma \ref{LapT} as

$$\Expectation{(\nu^{(n)}_{ij})^2}=\Expectation{\Big(\Et{e^{-2\alpha\tau^{(n)}}}\Big)^2}=\Var{\Et{e^{-2\alpha\tau^{(n)}}}}+\Big(\Expectation{e^{-2\alpha\tau^{(n)}}}\Big)^2.$$

\noindent (i) With $Z_1=Z_2=Z_3=Z_4=W_1$ in  Eq. \eqref{cprod}, we obtain

\begin{align*}
\Et{W_1^{4}}& =\delta^4e^{-4\alpha U_{n}}+6\delta^2e^{-2\alpha U_{n}}(1-e^{-2\alpha U_{n}})+3(1-e^{-2\alpha U_{n}})^2,
\end{align*}
and therefore  $\Expectation{W_{1}^4}\to3$  as $n\to\infty$.\\

\noindent(ii) Using Eq. \eqref{cprod} with $Z_1=Z_2=Z_3=W_1$ and $Z_4=W_2$
we obtain

\begin{align*}
\Et{W_{1}^3W_{2}}&=\delta^4e^{-4\alpha U_{n}}
+3\delta^2e^{-2\alpha U_{n}}(1- e^{-2\alpha U_{n}})
\\
&\quad+3\delta^2e^{-2\alpha U_{n}}(\nu_{12}^{(n)}-e^{-2\alpha U_{n}})+3(1- e^{-2\alpha U_{n}})(\nu_{12}^{(n)}-e^{-2\alpha U_{n}})\\
&=3\nu_{12}^{(n)}-3(\delta^{2}-1)(1-\nu_{12}^{(n)})e^{-2\alpha U_{n}}+(\delta^{4}-6\delta^{2}+3)e^{-4\alpha U_{n}},
\end{align*}
%implying for $\alpha<0.5$
resulting in $\Expectation{W_{1}^3W_{2}}\to0$  as $n\to\infty$.\\

\noindent(iii) Eq. \eqref{cprod} with $Z_1=Z_2=W_1$ and $Z_3=Z_4=W_2$ gives

\begin{align*}
\Et{W_{1}^{2}W_{2}^{2}}&=\delta^4e^{-4\alpha U_{n}}+2\delta^2e^{-2\alpha U_{n}}(1- e^{-2\alpha U_{n}})
\\
&\quad+4\delta^2e^{-2\alpha U_{n}}(\nu_{12}^{(n)}-e^{-2\alpha U_{n}})+(1- e^{-2\alpha U_{n}})^2+2(\nu_{12}^{(n)}-e^{-2\alpha U_{n}})^2\\
&=1+2(\delta^{2}-1)e^{-2\alpha U_{n}}
\\&\quad +(\delta^{4}-6\delta^{2}+5)e^{-4\alpha U_{n}}
+4(\delta^{2}-1)\nu_{12}^{(n)}e^{-2\alpha U_{n}}+2(\nu_{12}^{(n)})^{2},
\end{align*}
so that $\Expectation{W_{1}^{2}W_{2}^{2}}\to1$  as $n\to\infty$.\\

\noindent(iv) Using a consequence of Eq. \eqref{cprod},

\begin{align*}
\Expectation{Z_{1}^{2}Z_{2}Z_{3}} &= m^{4}+m^{2}(c_{11}+2c_{12}+2c_{13}+c_{23})+ c_{11}c_{23}+2c_{12}c_{13},
\end{align*}
we get

\begin{align*}
\Et{W_{1}^2W_{2}W_3}&=\delta^4e^{-4\alpha U_{n}}
+\delta^2e^{-2\alpha U_{n}}(1- e^{-2\alpha U_{n}})
\\
&\quad+\delta^2e^{-2\alpha U_{n}}(2\nu_{12}^{(n)}+2\nu_{13}^{(n)}+\nu_{23}^{(n)}-5e^{-2\alpha U_{n}})\\
&\quad+(1- e^{-2\alpha U_{n}})(\nu_{23}^{(n)}-e^{-2\alpha U_{n}})+2(\nu_{12}^{(n)}-e^{-2\alpha U_{n}})(\nu_{13}^{(n)}-e^{-2\alpha U_{n}})\\
&=(\delta^{2}-1)e^{-2\alpha U_{n}}+(\delta^{4}-6\delta^{2}+3)e^{-4\alpha U_{n}}+2\nu_{12}^{(n)}\nu_{13}^{(n)}\\
&\quad+2(\delta^{2}-1)e^{-2\alpha U_{n}}(\nu_{12}^{(n)}+\nu_{13}^{(n)})+(1+(\delta^{2}-1)e^{-2\alpha U_{n}})\nu_{23}^{(n)}.
\end{align*}
Using the Cauchy-Schwarz inequality

$$0\le\Expectation{\nu_{12}^{(n)}\nu_{13}^{(n)}}\le \Big(\Expectation{\nu_{12}^{(n)}}\Big)^2=\Big(\Expectation{e^{-2\alpha\tau^{(n)}}}\Big)^2=o(1),$$
we obtain $\Expectation{W_{1}^{2}W_{2}W_{3}}\to0$  as $n\to\infty$.\\

\noindent (v) According to Eq. \eqref{cprod} we have

\begin{align*}
\Et{W_{1}W_{2}W_3W_4}&=\delta^4e^{-4\alpha U_{n}}\\
&\quad
+\delta^2e^{-2\alpha U_{n}}(\nu_{12}^{(n)}+\nu_{13}^{(n)}+\nu_{14}^{(n)}+\nu_{23}^{(n)}+\nu_{24}^{(n)}+\nu_{34}^{(n)}- 6e^{-2\alpha U_{n}})
\\
&\quad+(\nu_{12}^{(n)}-e^{-2\alpha U_{n}})(\nu_{34}^{(n)}-e^{-2\alpha U_{n}})\\
&\quad+(\nu_{13}^{(n)}-e^{-2\alpha U_{n}})(\nu_{24}^{(n)}-e^{-2\alpha U_{n}})\\
&\quad+(\nu_{14}^{(n)}-e^{-2\alpha U_{n}})(\nu_{23}^{(n)}-e^{-2\alpha U_{n}}),
\end{align*}
implying

\begin{align*}
\Et{W_{1}W_{2}W_3W_4}&=(\delta^{4}-6\delta^{2}+3)e^{-4\alpha U_{n}}+\nu_{12}^{(n)}\nu_{34}^{(n)}+\nu_{12}^{(n)}\nu_{34}^{(n)}+\nu_{12}^{(n)}\nu_{34}^{(n)}\\
&\quad
+(\delta^2-1)e^{-2\alpha U_{n}}(\nu_{12}^{(n)}+\nu_{13}^{(n)}+\nu_{14}^{(n)}+\nu_{23}^{(n)}+\nu_{24}^{(n)}+\nu_{34}^{(n)}).
\end{align*}
Using an estimate for $\Expectation{\nu_{12}^{(n)}\nu_{34}^{(n)}}=\Expectation{\nu_{12}^{(n)}\nu_{34}^{(n)}}=\Expectation{\nu_{12}^{(n)}\nu_{34}^{(n)}}$  similar to that we used in (iv), we find  $\Expectation{W_{1}W_{2}W_{3}W_4}\to0$  as $n\to\infty$. \\

Finally, putting the above results (i) - (v) into Eq. \eqref{stist} we arrive at Eq. \eqref{cons}.

\section*{Acknowledgments}
We are grateful to Thomas F. Hansen and Anna Stokowska for helpful suggestions and comments.  Special thanks to an anonymous referee for the constructive suggestions to the earlier version of the paper, in particular, for the remark added after Lemma \ref{Vma}.

The research of Serik Sagitov was supported by the Swedish Research Council grant 621-2010-5623. 
Krzysztof Bartoszek was supported by the Centre for Theoretical Biology at the University of Gothenburg, 
Svenska Institutets \"Ostersj\"osamarbete
scholarship nr. 11142/2013, 
Stiftelsen f\"or Vetenskaplig Forskning och Utbildning i Matematik
(Foundation for Scientific Research and Education in Mathematics), 
Knut and Alice Wallenbergs travel fund, Paul and Marie Berghaus fund, the Royal Swedish Academy of Sciences,
and Wilhelm and Martina Lundgrens research fund.

\appendix
\section{ 
All moments of $U_{n}$ and $\tau^{(n)}$}\label{appSaM}
Eq. \eqref{eqLapT} for the Laplace transforms of the random
variable $U_{n}$ can be used to calculate the moments of $U_{n}$ using,

\be
\Expectation{U_{n}^{m}}=(-1)^{m}(\partial^{m}\Expectation{e^{-xU_{n}}}/\partial x^{m})\vert_{x=0}.
\ee
For a fixed $n$ we introduce the following notation,
\begin{align*}
A(x)&=\frac{1}{x+1}\cdot \ldots \cdot \frac{1}{x+n},\\
b_{m}(x)&=\frac{1}{(x+1)^{m}}+ \ldots + \frac{1}{(x+n)^{m}},\\ 
\mathbf b_{m}(x)&=(b_{1}(x),\ldots,b_{m}(x)).
\end{align*}
Notice that  
$A(0)=1/n!$ and $b_{m}(0)=H_{n,m} $ is the $n$--th generalized harmonic number of order $m$,

\be
H_{n,m} = \sum\limits_{i=1}^{n}\frac{1}{i^{m}}.
\ee

We can write Eq. \eqref{eqLapT} as $\Expectation{e^{-x U_{n}}}=n!A(x)$. Its first derivative with respect
to $x$ is $-n!A(x)b_{1}(x)$, and the second derivative is $n!A(x)(b_{1}(x)^{2}+b_{2}(x))$.
For the general recursive formula we introduce the following notation.
We will denote by 
\mbox{
$\mathbf k=(k_{1},k_{2},\ldots)$} infinite dimensional vectors with integer--valued components, and write $\mathbf k\in\mathcal{A}_{m}$ if all $k_i\ge0$ and $|\mathbf k|:=\sum_{i=1}^{m} k_{i}i=m$.
Therefore $\mathcal{A}_{m}$ represents the set of all possible ways to represent $m$ as a sum of positive integers. We will also
use the multi--index notation  \mbox{
$\mathbf b_m(x)^{\mathbf k} =b_{1}(x)^{k_{1}}\cdot \ldots \cdot b_{m}(x)^{k_{m}}$.} 

Since $A'(x)=-A(x)b_{1}(x)$, 
and $b'_{m}(x)=-mb_{m+1}(x)$, we can show by induction that,

\begin{align}
\frac{\partial^{m}}{\partial x^{m}}\Expectation{e^{-x U_{n}}} & = 
(-1)^{m}n!A(x)\sum\limits_{\substack {\mathbf k\in \mathcal{A}_{m}}}c_{\mathbf k}
\mathbf b_m(x)^{\mathbf k},\label{moT}
\end{align}
where coefficients $c_{\mathbf k}$ are defined for all vectors $\mathbf k=(k_1,k_2,\ldots)$ with integer--valued components using the recursion,

\be\label{recc}
c_{\mathbf k}=\sum\limits_{\substack {j=0}}^m(jk_{j}+1)c_{\mathbf k,j},\ee
with $m=|\mathbf k|$ and

\begin{align*}
c_{\mathbf k,0}&=c_{(k_{1}-1,k_{2},k_3,\ldots)},\\
c_{\mathbf k,j}&=c_{(k_{1},\ldots,k_{j}+1,k_{j+1}-1,\ldots)}, \ j\ge1.
\end{align*}
The boundary conditions for the recursion of Eq. \eqref{recc} consist of two parts:
\begin{itemize}
\item $c_{\mathbf k}=0$, if all $k_i=0$, or one of the coordinates of the vector $\mathbf k$ is negative,
\item $c_{\mathbf k}=1$ if $k_1\ge1$ and all other $k_i=0$.
\end{itemize}
We conclude from Eq. \eqref{moT} that,
\[\Expectation{U_{n}^{m}} =\sum_{\mathbf k \in \mathcal{A}_{m} }c_{\mathbf k}\prod_{i=1}^m H_{n,i}^{k_i}.\]

The technique for calculating the $m$--th derivative of the Laplace 
transform of $\tau^{(n)}$ given by Eq. \eqref{eqLaptau} is the same but requires new notation 

\begin{align*}
\hat A(y)&=\frac{1}{y-1}\cdot\frac{1}{y+2}\cdot \ldots \cdot\frac{1}{y+n},\\
\hat b_{m}(y)&=\frac{1}{(y-1)^{m}}+\frac{1}{(y+2)^{m}}+\ldots+\frac{1}{(y+n)^{m}}.
\end{align*}
Notice that $\hat A'(y)=-\hat A(y)\hat b_{1}(y)$, 
$\hat b'_{m}(y)=-m\hat b_{m+1}(y)$,
$\hat A(0)=-n!$ and \mbox{
$\hat b_{m}(0)=H_{n,m}$} if $m$ is even or $\hat b_{m}(0)=H_{n,m}-2$ if $m$ is odd.
One can then inductively show that,

\begin{align*}
\frac{\partial^{m}}{\partial y^{m}}\Expectation{e^{-y \tau^{(n)}}} & = \frac{(-1)^{m}2m!}{(n-1)(y-1)^{m-1}} -
\frac{(-1)^{m+1}(n+1)!}{n-1}\hat A(y)(\hat b_1(y)^m\\
&\quad +\sum\limits_{\substack {\mathbf k\in \mathcal{A}_{m} \\ k_1<m}}c_{\mathbf k}
\hat{\mathbf b}_m(y)^{\mathbf k}),\label{mot}\end{align*}
with the coefficients $c_{\mathbf k}$ defined as previously  by Eq. \eqref{recc}. 
Therefore, we get,

\begin{align*}
\Expectation{\tau^{(n)^{m}}} & =\frac{2m!}{n-1}-(H_{n,1}-2)^{m} + \sum_{\substack{\mathbf k \in \mathcal{A}_{m} \\ k_1<m}}c_{\mathbf k}\prod_{\substack{i=1\\i~\mathrm{odd}}}^m (H_{n,i}-2)^{k_i}\prod_{\substack{i=1\\i~\mathrm{even}}}^m H_{n,i}^{k_i}.
\end{align*}

Similarly we can use Eq. \eqref{jL} to calculate the joint moments for $U_{n}-\tau^{(n)}$ and $\tau^{(n)}$ in terms of,

\begin{align*}
A^{(i,j)}(x)&=\frac{1}{x+i+1}\cdot \ldots \cdot\frac{1}{x+j},\\
b_{m}^{(i,j)}(x)&=\frac{1}{(x+i+1)^{m}}+\ldots+\frac{1}{(x+j)^{m}}.
\end{align*}
For $m\ge1$ and $r\ge1$ we first get,

\begin{align*}
\frac{\partial^{m+r}}{\partial x^{m}\partial y^{r}}&\Expectation{e^{-x (U_{n}-\tau^{(n)})-y \tau^{(n)}}} = 
(-1)^{m+r}\frac{2(n+1)!}{n-1}\\
& \times
\sum\limits_{j=1}^{n-1}\frac{A^{(0,j)}(x) A^{(j,n)}(y)}{(j+1)(j+2)} \left(\sum\limits_{\substack {\mathbf k\in \mathcal{A}_{m}}}c_{\mathbf k}
\mathbf b_m^{(0,j)}(x)^{\mathbf k}\right)
\left(\sum\limits_{\substack {\mathbf k\in \mathcal{A}_{r}}}c_{\mathbf k}
\mathbf b_r^{(j,n)}(y)^{\mathbf k}\right),\end{align*}
and then from the above,

\begin{align*}
\Expectation{(U_{n}-\tau^{(n)})^{m}\tau^{(n)^{r}}} & = 
(-1)^{m+r}\frac{2(n+1)}{n-1}\\
&\hspace{-15mm}\times
\sum\limits_{j=1}^{n-1}\frac{1}{(j+1)(j+2)} \left(\sum\limits_{\substack {\mathbf k\in \mathcal{A}_{m}}}c_{\mathbf k}
\prod_{i=1}^m H_{j,i}^{k_i}\right)
\left(\sum\limits_{\substack {\mathbf k\in \mathcal{A}_{r}}}c_{\mathbf k}
\prod_{i=1}^r (H_{n,i}-H_{j,i})^{k_i}\right).
\end{align*}

\bibliographystyle{plainnat}
\bibliography{SagitovBartoszek}

\end{document}